
\documentclass[12pt]{article}
\usepackage{algorithm,algorithmic,amsmath,amssymb,bm,caption,subcaption,graphicx,placeins,color,natbib,multirow,amsthm,fullpage,pdflscape,ragged2e}
\usepackage{pdflscape}

\usepackage{mathtools}

\newtheorem{theorem}{Theorem}
\newtheorem{remark}{Remark}
\newtheorem{lemma}{Lemma}

\newtheorem{proposition}{Proposition}

\title{Control Functionals for Monte Carlo Integration}
\author{Chris J. Oates$^{1,2,}$\footnote{{\it Address for correspondence:} School of Mathematical and Physical Sciences, University of Technology Sydney, NSW 2007, Australia. E-mail: christopher.oates@uts.edu.au}, Mark Girolami$^{3,4}$ and Nicolas Chopin$^5$ \\
$^1$University of Technology Sydney, Australia \\
$^2$Australian Research Council Centre for Excellence in Mathematical \\ and Statistical Frontiers \\
$^3$University of Warwick, Coventry, UK \\
$^4$Alan Turing Institute \\
$^5$CREST-LS and ENSAE, Paris, France}

\begin{document}
\maketitle

\begin{abstract}
A non-parametric extension of control variates is presented.
These leverage gradient information on the sampling density to achieve substantial variance reduction.
It is not required that the sampling density be normalised.
The novel contribution of this work is based on two important insights; (i) a trade-off between random sampling and deterministic approximation and (ii) a new gradient-based function space derived from Stein's identity. 
Unlike classical control variates, our estimators achieve super-root-$n$ convergence, often requiring orders of magnitude fewer simulations to achieve a fixed level of precision.
Theoretical and empirical results are presented, the latter focusing on integration problems arising in hierarchical models and models based on non-linear ordinary differential equations.
\end{abstract}
{\it Keywords:} control variates, non-parametric, reproducing kernel, Stein's identity, variance reduction

\section{Introduction} \label{intro}

Statistical methods are increasingly being employed to analyse complex models of physical phenomena \citep[e.g. in climate forecasting or simulations of molecular dynamics;][]{Slingo,Angelikopoulos}.
Analytic intractability of complex models has inspired the development of sophisticated Monte Carlo methodologies to facilitate computation \citep{Robert}.
In their most basic form, Monte Carlo estimators converge as the reciprocal of root-$n$ where $n$ is the number of random samples.
For complex models it may only be feasible to obtain a limited number of samples \citep[e.g. a recent Met Office model for future climate simulations required the order of $10^6$ core-hours per simulation;][]{Mizielinski}.
In these situations, root-$n$ convergence is too slow and leads in practice to high-variance estimation.
Our contribution is motivated by resolving this issue and provides novel methodology that is both formal and general.

The focus of this paper is the estimation of an expectation $\mu(f) = \int f(\bm{x}) \pi(\bm{x}) \mathrm{d}\bm{x}$, where $f$ is a test function of interest and $\pi$ is a probability density associated with a random variable $\bm{X}$.
Provided that $f(\bm{X})$ has variance $\sigma^2(f) < \infty$, the arithmetic mean estimator
\begin{eqnarray*}
\frac{1}{n}\sum_{i=1}^n f(\bm{x}_i), 
\end{eqnarray*}
based on $n$ independent and identically distributed (IID) samples $\{\bm{x}_i\}_{i=1}^n$ of the random variable, satisfies the central limit theorem and converges to $\mu(f)$ at the rate $O_P(n^{-1/2})$, or simply at ``root-$n$''.
When working with complex models, root-$n$ convergence can be problematic, as highlighted in e.g. \cite{Ba}.
A model is considered complex when either (i) $\bm{X}$ is expensive to simulate, or (ii) $f$ is expensive to evaluate, in each case relative to the required estimator precision.
Both situations are prevalent in scientific and engineering applications \citep[e.g.][]{Kohlhoff,Higdon}.
This paper introduces a class of estimators that converge more quickly than root-$n$.
The significance of our contribution is made clear in the comparative overview below.

Generic approaches to reduction of variance are well-known in both statistics and numerical analysis.
These include (i) importance sampling and its extensions \citep{Cornuet,Li2}, (ii) stratified sampling and related techniques \citep{Rubinstein2}, (iii) antithetic variables \citep{Green2} and more generally (randomised) quasi-Monte Carlo \citep[QMC/RQMC;][]{Dick}, (iv) Rao-Blackwellisation \citep{Robert,Douc,Ghosh,Olsson},  (v) Riemann sums \citep{Philippe}, (vi) control variates \citep{Glasserman,Mira,Li}, (vii) multi-level Monte Carlo and related techniques \citep[e.g.][]{Heinrich2,Giles,Giles2}, (viii) Bayesian Monte Carlo \citep[BMC;][]{OHagan,Rasmussen,Briol}, and (ix) a plethora of sophisticated Markov chain Monte Carlo sampling schemes \citep[MCMC;][]{Latuszynski2}.
Classical introductions to many of the above techniques include \citet[][Chap. 4]{Robert} and \citet[][Chap. 5]{Rubinstein2}.

Motivated by contemporary statistical applications, we state four {\it desiderata} for a variance reduction technique:
(I) {\it Unbiased estimation:} Monte Carlo (MC) methods based on IID samples produce unbiased estimators, whilst techniques such as MCMC generally produce biased estimators.
(II) {\it Compatibility with an un-normalised density $\pi$:} An ``un-normalised'' density is known only up to proportionality so that, for example, MCMC techniques are required for sampling.
(III) {\it Super-root-$n$ convergence} (for sufficiently regular $f$): The convergence rates of (R)QMC are well studied and can be super-root-$n$. 
Riemann sums can also achieve super-root-$n$ rates and \cite{Briol2} showed the same holds for BMC.
(IV) {\it Post-hoc schemes:} Rao-Blackwellisation, Riemann sums, BMC and control variates can all be conceived as {\it post-hoc} schemes; i.e. schemes that can be applied retrospectively after samples have been obtained. 
In contrast, the remaining methods require modification to computer code for the sampling process itself.
The former are appealing from both a theoretical and a practical perspective since they separate the challenge of sampling from the challenge of variance reduction.

\begin{table}

\centering
\resizebox{\textwidth}{!}{
\begin{tabular}{p{7cm}ccccc}
\hline
Estimation Method & Unbiased & Un-normalised $\pi$ & Super-root-$n$ & Post-hoc \\ \hline
MC(/MCMC) + Arithmetic Mean  & \checkmark(/$\times$) & $\times$(/\checkmark) & $\times$ & $\times$ \\
MC + Importance Sampling   & \checkmark(/$\times$) & $\times$(/\checkmark) & $\times$ & $\times$ \\ 
MC + Antithetic Variables   & \checkmark & $\times$ & $\times$ & $\times$ \\
MC(/MCMC) + Stratified Sampling  & \checkmark(/$\times$) & $\times$(/\checkmark)  & $\times$ & $\times$ \\
Quasi-MC (QMC)   & $\times$ & $\times$ & \checkmark & $\times$  \\ 
Randomised QMC (RQMC)   & \checkmark & $\times$ & \checkmark & $\times$ \\ 
MC(/MCMC) + Rao-Blackwellisation  & \checkmark(/$\times$) & $\times$(/\checkmark) & $\times$ & \checkmark \\
MC(/MCMC) + Control Variates  & \checkmark(/$\times$)  &  $\times$(/\checkmark) & $\times$ & \checkmark \\
MC(/MCMC) + Riemann Sums   & $\times$ & $\times$(/\checkmark) & \checkmark & \checkmark \\ 
Bayesian MC (BMC) & $\times$ & $\times$ & \checkmark & \checkmark \\ \hline
MC(/MCMC) + Control Functionals   & \checkmark($/\times$) & $\times$(/\checkmark) & \checkmark & \checkmark \\ \hline
\end{tabular}}
\caption{A comparison of estimation methods for integrals. 
[``Unbiased'' = the estimator is unbiased for $\mu(f)$.
``Un-normalised $\pi$'' = the estimator can handle sampling densities that are only available up to proportionality.
``Super-root-$n$'' = the estimator converges faster than root-$n$.
``Post-hoc'' = the estimator places no restriction on how the samples $\bm{x}_i$ are generated, i.e. requires no modification to computer code for sampling.
Estimator properties may change in order to handle un-normalised densities $\pi$; these are shown in parentheses.]
}
\label{glass}

\end{table}

Table \ref{glass} summarises existing techniques in relation to these {\it desiderata}; note that no technique fulfils all four criteria.
In contrast, the method proposed here, called ``control functionals'', is able to satisfy all four {\it desiderata}.
Control functionals appear to be similar, in this sense, to Riemann sums i.e. they are a super-root-$n$, {\it post-hoc} approach that applies to un-normalised sampling densities.
However, Riemann sums are rarely used in practice due to (i) the fact that estimators are biased at finite sample sizes, and (ii) there is a prohibitive increase in methodological complexity for multi-dimensional state spaces.
Control functionals do not posses either of these drawbacks.

The control functional method that we develop below can be intuitively considered as a non-parametric development of control variates.
In control variate schemes one seeks a basis $\{s_i\}_{i=1}^m$, $m \in \mathbb{N}$, that have expectation $\mu(s_i) = 0$.
Then a surrogate function $\tilde{f} = f - a_1s_1 - \dots - a_ms_m$ is constructed such that $\mu(\tilde{f}) = \mu(f)$ and, for suitably chosen $a_1,\dots,a_m \in \mathbb{R}$, a variance reduction $\sigma^2(\tilde{f}) < \sigma^2(f)$ is obtained \citep[see e.g.][]{Rubinstein}.
The statistics $s_i$ are known as control variates and the variance $\sigma^2(\tilde{f})$ can be reduced to zero if and only if there is perfect canonical correlation between $f$ and the basis $\{s_i\}_{i=1}^m$.
For estimation based on Markov chains, control variates for the discrete state space case were provided by \cite{Andradottir}.
For continuous state spaces, statistics relating to the chain can be used as control variates \citep{Hammer,Dellaportas,Li}.
Alternatively control variates can be constructed based on gradient information \citep{Assaraf,Mira}.

The control variates described above are solving a misspecified regression problem, since in general $f$ will not be a linear combination of the $s_i$ basis functions.
As such they achieve at most a constant factor reduction in estimator variance.
Intuitively, one would like to increase the number $m$ of basis functions to increase in line with the number $n$.
\cite{Mijatovic} explored this approach within the Metropolis-Hastings method.
However, their solution requires the user to partition of the state space, which limits its wider appeal.
This paper introduces a powerful new perspective on variance reduction that fully resolves these issues, satisfying all the {\it desiderata} described above.
To realise our method we developed a gradient-based function space that leads to closed-form estimators whose convergence can be guaranteed.
The functional analysis perspective works ``out of the box'', without requiring the user to partition the state space.
Extensive empirical support is provided in favour of the proposed method, including applications to hierarchical models and models based on non-linear differential equations.
In each case state-of-the-art estimation is achieved.

All results can be reproduced using \verb+MATLAB R2015a+ code that is available to download from \verb+http://warwick.ac.uk/control_functionals+.

\section{Methodology} \label{methods}

\subsection{Set-up and notation}

Consider a random vector $\bm{X}$ taking values in an open set $\Omega \subseteq \mathbb{R}^d$.
Assume $\bm{X}$ admits a positive density on $\Omega$ with respect to $d$-dimensional Lebesgue measure,  written $\pi(\bm{x})>0$.
For bounded $\Omega$ with boundary $\partial \Omega$, we assume $\partial\Omega$ is piecewise smooth (i.e. infinitely differentiable).
Write $\mathcal{L}^2(\pi)$ for the space of measurable functions $g:\Omega \rightarrow \mathbb{R}$ for which $\int_{\Omega} g(\bm{x})^2 \pi(\bm{x}) \mathrm{d}\bm{x}$ is finite. 
Write $C^k(\Omega,\mathbb{R}^j)$ for the space of (measurable) functions from $\Omega$ to $\mathbb{R}^j$ with continuous partial derivatives up to order $k$.
Consider a test function $f : \Omega \rightarrow \mathbb{R}$ of interest, assume $f \in \mathcal{L}^2(\pi)$ and write $\mu(f) := \int_{\Omega} f(\bm{x}) \pi(\bm{x}) \mathrm{d}\bm{x}$, $\sigma^2(f) := \int_{\Omega} (f(\bm{x}) - \mu(f))^2 \pi(\bm{x}) \mathrm{d}\bm{x}$.

Denote by $\mathcal{D} = \{\bm{x}_i\}_{i=1}^n$ a collection of states $\bm{x}_i \in \Omega$.
At each state $\bm{x}_i$ the corresponding function values $f(\bm{x}_i)$ and gradients $\nabla_{\bm{x}} \log \pi(\bm{x}_i)$ are assumed to have been pre-computed and cached. 
The method that we develop does not then require any further recourse to the statistical model $\pi$, nor any further evaluations of the function $f$, and is in this sense a widely-applicable {\it post-hoc} scheme.

\subsection{From control variates to control functionals}

\subsubsection{Deterministic approximation}

Our starting point is establish a trade-off between random sampling and deterministic approximation, as suggested on several separate occasions by authors including \cite{Bakhvalov,Heinrich2,Speight,Giles}.

Consider a dichotomy of available states $\mathcal{D} = \{\bm{x}_i\}_{i=1}^n$ into two disjoint subsets $\mathcal{D}_0 = \{\bm{x}_i\}_{i=1}^m$ and $\mathcal{D}_1 = \{\bm{x}_i\}_{i=m+1}^n$, where $1 \leq m < n$.
Although $m$, $n$ are fixed, we will be interested in the asymptotic regime where $m = O(n^\gamma)$ for some $\gamma \in [0,1]$.
Consider surrogate functions of the form
\begin{eqnarray*}
f_{\mathcal{D}_0}(\bm{x}) := f(\bm{x}) - s_{f,\mathcal{D}_0}(\bm{x}) + \mu(s_{f,\mathcal{D}_0}), \label{tradeoff}
\end{eqnarray*}
where $s_{f,\mathcal{D}_0} \in \mathcal{L}^2(\pi)$ is an approximation to $f$, based on $\mathcal{D}_0$, whose expectation $\mu(s_{f,\mathcal{D}_0})$ is analytically tractable.
By construction $f_{\mathcal{D}_0} \in \mathcal{L}^2(\pi)$, $\mu(f_{\mathcal{D}_0}) = \mu(f)$ and $\sigma^2(f_{\mathcal{D}_0}) = \sigma^2(f - s_{f,\mathcal{D}_0})$.
We study estimators of the form
\begin{eqnarray*}
\hat{\mu}(\mathcal{D}_0,\mathcal{D}_1;f) := \left\{ \begin{array}{ll} \frac{1}{n-m} \sum_{i=m+1}^{n} f_{\mathcal{D}_0}(\bm{x}_{i}) & \text{for } m < n \\ \mu(s_{f,\mathcal{D}_0}) & \text{for } m = n. \end{array} \right.
\end{eqnarray*}
For theoretical purposes the second subset $\mathcal{D}_1$ is assumed to be an IID sample from $\pi$, statistically independent from $\mathcal{D}_0$.
Then, for $m < n$, we have unbiasedness, i.e. $\mathbb{E}_{\mathcal{D}_1}[\hat{\mu}(\mathcal{D}_0,\mathcal{D}_1;f)] = \mu(f)$, where the expectation here is with respect to the sampling distribution $\pi$ of the $n-m$ random variables that constitute $\mathcal{D}_1$, and is conditional on $\mathcal{D}_0$.
The corresponding estimator variance, conditional on $\mathcal{D}_0$, is $\mathbb{V}_{\mathcal{D}_1}[\hat{\mu}(\mathcal{D}_0,\mathcal{D}_1;f)] = (n-m)^{-1} \sigma^2(f - s_{f,\mathcal{D}_0})$.
This formulation encompasses control variates as the special case where $s_{f,\mathcal{D}_0}$ is constrained to a finite-dimensional space.

The insight required to go beyond control variates and achieve super-root-$n$ convergence is that we can use an infinite-dimensional space to construct an increasingly accurate approximations $s_{f,\mathcal{D}_0}$ as $m \rightarrow \infty$.
We allow for the possibility that the first subset $\mathcal{D}_0$ are also random and write $\mathbb{E}_{\mathcal{D}_0}$ to denote an expectation with respect to the (marginal) distribution of these $m$ random variables.

\begin{proposition} \label{theo1}
Assume $m = O(n^\gamma)$ for some $\gamma \in [0,1]$ and that the expected functional approximation error (EFAE) vanishes as
\begin{eqnarray*}
\mathbb{E}_{\mathcal{D}_0}[\sigma^2(f - s_{f,\mathcal{D}_0})] = O(m^{-\delta}) \label{EFAE eqn}
\end{eqnarray*}
for some $\delta \geq 0$.
Then $\mathbb{E}_{\mathcal{D}_0} \mathbb{E}_{\mathcal{D}_1}[(\hat{\mu}(\mathcal{D}_0,\mathcal{D}_1;f) - \mu(f))^2] = O(n^{-1-\gamma\delta})$.
\end{proposition}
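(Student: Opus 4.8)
The plan is to reduce the two-fold mean square error to the conditional variance formula already derived in the text, and then substitute the two asymptotic hypotheses. First I would condition on $\mathcal{D}_0$. For $m<n$ the text has established that $\hat{\mu}(\mathcal{D}_0,\mathcal{D}_1;f)$ is conditionally unbiased for $\mu(f)$ given $\mathcal{D}_0$, so by the bias--variance decomposition its conditional mean square error equals its conditional variance; the latter was shown to be $(n-m)^{-1}\sigma^2(f-s_{f,\mathcal{D}_0})$. Hence
\begin{equation*}
\mathbb{E}_{\mathcal{D}_1}\!\left[(\hat{\mu}(\mathcal{D}_0,\mathcal{D}_1;f)-\mu(f))^2 \mid \mathcal{D}_0\right] = \frac{\sigma^2(f-s_{f,\mathcal{D}_0})}{n-m}.
\end{equation*}

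Second, I would apply the law of total expectation, taking $\mathbb{E}_{\mathcal{D}_0}$ of both sides. Since $m$ and $n$ are fixed, the factor $(n-m)^{-1}$ comes outside the expectation, so that
\begin{equation*}
\mathbb{E}_{\mathcal{D}_0}\mathbb{E}_{\mathcal{D}_1}\!\left[(\hat{\mu}(\mathcal{D}_0,\mathcal{D}_1;f)-\mu(f))^2\right] = \frac{\mathbb{E}_{\mathcal{D}_0}[\sigma^2(f-s_{f,\mathcal{D}_0})]}{n-m} = \frac{O(m^{-\delta})}{n-m},
\end{equation*}
where the last step is precisely the EFAE hypothesis.

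Third, I would translate the rate in $m$ into a rate in $n$. Using $m \asymp n^{\gamma}$ with $\gamma\in[0,1]$ gives $m^{-\delta}=O(n^{-\gamma\delta})$; and since $m<n$ with $\gamma\le 1$ we have $n-m=\Theta(n)$, hence $(n-m)^{-1}=O(n^{-1})$. Multiplying the two factors yields $O(n^{-1})\cdot O(n^{-\gamma\delta})=O(n^{-1-\gamma\delta})$, as claimed. (The degenerate case $m=n$ is handled separately in the definition of $\hat{\mu}$, where the estimator equals $\mu(s_{f,\mathcal{D}_0})$ and the error reduces to that of the deterministic approximation.)

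I expect no serious obstacle: the argument is essentially bookkeeping once the conditional variance identity is in hand. The only point requiring care is the reading of the growth condition $m=O(n^{\gamma})$. For the stated conclusion it is not enough that $m$ merely be $O(n^{\gamma})$ --- if $m$ stayed bounded, the factor $m^{-\delta}$ would not decay and only the classical $O(n^{-1})$ rate would follow --- so I would treat $m\asymp n^{\gamma}$ (i.e.\ $m$ of exact order $n^{\gamma}$, together with $\limsup m/n<1$ when $\gamma=1$ so that $\mathcal{D}_1$ retains a positive fraction of the samples) as part of the specification of the dichotomy rather than as an additional hypothesis.
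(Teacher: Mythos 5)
Your argument is the same as the paper's: conditional unbiasedness given $\mathcal{D}_0$ reduces the mean square error to the conditional variance $(n-m)^{-1}\sigma^2(f-s_{f,\mathcal{D}_0})$, the tower property and the EFAE hypothesis give $(n-m)^{-1}O(m^{-\delta})$, and the growth condition on $m$ converts this to $O(n^{-1-\gamma\delta})$. Your closing caveat --- that the conclusion really requires $m$ of exact order $n^{\gamma}$ with $m/n$ bounded away from $1$, not merely $m=O(n^{\gamma})$ --- is a fair reading of what the paper implicitly intends (cf.\ the remark following the proposition, where $m/n\rightarrow r\in(0,1)$ is adopted), and does not affect the correctness of your proof.
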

\noindent All proofs are reserved for Appendix \ref{aproofs}.
\begin{remark}
Taking $\gamma = 1$ optimises the rate in Prop. \ref{theo1} and we therefore assume in the sequel that $m/n \rightarrow r \in (0,1)$.
\end{remark}

\subsubsection{Control variates based on Stein's identity}

To construct approximations $s_{f,\mathcal{D}_0}$ whose integrals $\mu(s_{f,\mathcal{D}_0})$ are analytically tractable, we make the assumption 
\begin{enumerate}
\item[(A1)] The density $\pi$ belongs to $C^{1}(\Omega,\mathbb{R})$.
\end{enumerate}
Denote the gradient function by $\bm{u}(\bm{x}) := \nabla_{\bm{x}} \log \pi(\bm{x})$ where $\nabla_{\bm{x}} : = [\partial/\partial x_1,\dots,\partial/\partial x_d]^T$, well-defined by (A1).
We study approximations of the form
\begin{eqnarray}
s_{f,\mathcal{D}_0}(\bm{x}) & := & c + \psi(\bm{x})  \nonumber \\
\psi(\bm{x}) & := & \nabla_{\bm{x}} \cdot \bm{\phi}(\bm{x}) + \bm{\phi}(\bm{x}) \cdot \bm{u}(\bm{x}) \label{cfdef2}
\end{eqnarray}
where $c \in \mathbb{R}$ is a constant and $\bm{\phi} \in C^1(\Omega,\mathbb{R}^d)$.
Eqn. \ref{cfdef2} appears in Stein's classical test for approximate normality \citep{Stein} and related to (but simpler than) control variates proposed by \cite{Assaraf,Mira}.
We make the following assumption \citep[c.f. e.g. Eqn. 9 of][]{Mira}:
\begin{enumerate}
\item[(A2)] Let $\bm{n}(\bm{x})$ be the unit normal to the boundary $\partial\Omega$ of the state space $\Omega$. Then
\begin{eqnarray*}
\oint_{\partial\Omega} \pi(\bm{x})\bm{\phi} (\bm{x}) \cdot \bm{n}(\bm{x}) S(\mathrm{d}\bm{x}) = 0.
\end{eqnarray*}
\end{enumerate}
(The notation $\oint_{\partial\Omega}$ denotes a surface integral over $\partial\Omega$ and $S(\mathrm{d}\bm{x})$ denotes the surface element at $\bm{x} \in \partial\Omega$.)
Stein's identity implies that this class of approximations has integrals that are analytically tractable:
\begin{proposition} \label{div prop}
Assume (A1,2). Then $\mu(\psi) = 0$ and so $\mu(s_{f,\mathcal{D}_0}) = c$.
\end{proposition}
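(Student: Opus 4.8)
The plan is to show that the product $\pi\psi$ is itself a divergence, and then to apply the divergence theorem together with assumption (A2). First, by (A1) the field $\bm u = \nabla_{\bm x}\log\pi$ is well defined and $\nabla_{\bm x}\pi = \pi\,\bm u$ pointwise on $\Omega$. Since $\bm\phi\in C^1(\Omega,\mathbb R^d)$ and $\pi\in C^1(\Omega,\mathbb R)$, the product rule for the divergence of a scalar times a vector field gives
\[
\nabla_{\bm x}\cdot\bigl(\pi(\bm x)\,\bm\phi(\bm x)\bigr)
= \pi(\bm x)\,\nabla_{\bm x}\cdot\bm\phi(\bm x) + \bm\phi(\bm x)\cdot\nabla_{\bm x}\pi(\bm x)
= \pi(\bm x)\bigl(\nabla_{\bm x}\cdot\bm\phi(\bm x) + \bm\phi(\bm x)\cdot\bm u(\bm x)\bigr)
= \pi(\bm x)\,\psi(\bm x).
\]
Hence $\mu(\psi) = \int_\Omega \psi(\bm x)\pi(\bm x)\,\mathrm d\bm x = \int_\Omega \nabla_{\bm x}\cdot(\pi\bm\phi)\,\mathrm d\bm x$.

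The second step is to evaluate this last integral. When $\Omega$ is bounded, its boundary $\partial\Omega$ is piecewise smooth by the standing assumptions, so the classical divergence theorem applies to the $C^1$ vector field $\pi\bm\phi$ and yields $\int_\Omega \nabla_{\bm x}\cdot(\pi\bm\phi)\,\mathrm d\bm x = \oint_{\partial\Omega}\pi(\bm x)\,\bm\phi(\bm x)\cdot\bm n(\bm x)\,S(\mathrm d\bm x)$, which is exactly the quantity assumed to vanish in (A2); therefore $\mu(\psi)=0$. When $\Omega$ is unbounded one instead takes an increasing sequence of bounded open sets $\Omega_k\uparrow\Omega$ with piecewise-smooth boundaries, applies the divergence theorem on each $\Omega_k$, and passes to the limit: the left-hand side converges to $\mu(\psi)$ by dominated convergence (using $\psi\in\mathcal L^1(\pi)$, which follows from $s_{f,\mathcal D_0}=c+\psi\in\mathcal L^2(\pi)\subseteq\mathcal L^1(\pi)$ and $c\in\mathcal L^2(\pi)$), while the boundary fluxes $\oint_{\partial\Omega_k}\pi\bm\phi\cdot\bm n\,S(\mathrm d\bm x)\to 0$; in this setting (A2) is to be read as the assertion that this limit vanishes. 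In either case $\mu(\psi)=0$.

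Finally, since $s_{f,\mathcal D_0}=c+\psi$ with $c$ constant, linearity of the integral gives $\mu(s_{f,\mathcal D_0}) = c\,\mu(1) + \mu(\psi) = c + 0 = c$, which is the claimed conclusion.

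The routine part is the product-rule identity and the invocation of the divergence theorem on a bounded domain. The one genuine subtlety — and the step I would be most careful about — is the unbounded-domain case: one must ensure $\pi\bm\phi$ is integrable enough for the exhaustion argument to go through and that the fluxes over $\partial\Omega_k$ really do vanish in the limit, which is precisely the content that (A2) is encoding (compare the decay conditions in Eqn. 9 of \cite{Mira}). If one takes (A2) verbatim as a surface-integral identity on a bounded $\Omega$, the proof is immediate; making it rigorous for non-compact $\Omega$ is where the real work lies.
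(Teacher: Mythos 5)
Your proposal is correct and follows essentially the same route as the paper: rewrite $\pi\psi$ as $\nabla_{\bm{x}}\cdot(\pi\bm{\phi})$ via the product rule, apply the divergence theorem over the piecewise-smooth boundary, and invoke (A2) to kill the surface term, with the unbounded case handled by reading (A2) as a vanishing tail/flux condition exactly as the paper prescribes. Your extra care about the exhaustion argument and integrability in the non-compact case is a more explicit spelling-out of what the paper leaves implicit, not a different proof.
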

\noindent When $\Omega$ is unbounded, all surface integrals are interpreted as tail conditions.
i.e. (A2) should be replaced with $\oint_{\Gamma_r \cap \Omega} \pi(\bm{x}) \bm{\phi}(\bm{x}) \cdot \bm{n}(\bm{x}) S(\mathrm{d}\bm{x}) \rightarrow 0$, where $\Gamma_r \subset \mathbb{R}^d$ is the sphere of radius $r$ centred at the origin and $\bm{n}(\bm{x})$ is the unit normal to the surface of $\Gamma_r$.

The statistic $\psi$ is recognised as a control variate.
These control variates were explored in the case where $\bm{\phi}$ is a (gradient of a low-degree) polynomial by \cite{Assaraf}, \cite{Assaraf2} and \cite{Mira}.
This paper takes the innovative step of setting $\bm{\phi}$ within a function space to enable fully non-parametric approximation.
The functional approximation perspective differs fundamentally from the control variate approach, in which the estimation problem is formally mis-specified (i.e. $\bm{\phi}$ is restricted to a low dimensional parametric family that does not contain the ``true'' function).
We emphasise this key conceptual distinction by referring to $\psi$ as a control \emph{functional} (CF; reflecting the use of terminology from functional analysis).

\subsection{Theory}

This section establishes $\psi$ as belonging to a Hilbert space $\mathcal{H}_0 \subset \mathcal{L}^2(\pi)$. 
This allows us to formulate and solve a functional approximation problem that targets the EFAE.

\subsubsection{A Hilbert space of control functionals}

Specification of $\psi$ is equivalent to specification of $\bm{\phi}$.
We decide to restrict each component function $\phi_i : \Omega \rightarrow \mathbb{R}$ to a Hilbert space $\mathcal{H} \subset \mathcal{L}^2(\pi) \cap C^{1}(\Omega,\mathbb{R})$ with inner product $\langle \cdot,\cdot \rangle_{\mathcal{H}} : \mathcal{H} \times \mathcal{H} \rightarrow \mathbb{R}$.
Moreover we insist that $\mathcal{H}$ is a reproducing kernel Hilbert space.
This implies that there exists a symmetric positive definite function $k : \Omega \times \Omega \rightarrow \mathbb{R}$ such that (i) for all $\bm{x} \in \Omega$ we have $k(\cdot,\bm{x}) \in \mathcal{H}$ and (ii) for all $\bm{x} \in \Omega$ and $h \in \mathcal{H}$ we have $h(\bm{x}) = \langle h , k(\cdot,\bm{x}) \rangle_{\mathcal{H}}$ \citep[][Def. 1, p7, and Def. 2, p10]{Berlinet}.
The vector-valued function $\bm{\phi} : \Omega \rightarrow \mathbb{R}^d$ is defined in the Cartesian product space $\mathcal{H}^d := \mathcal{H} \times \dots \times \mathcal{H}$, itself a Hilbert space with the inner product $\langle \bm{\phi}, \bm{\phi}' \rangle_{\mathcal{H}^d} = \sum_{i=1}^d \langle \phi_i, \phi_i' \rangle_{\mathcal{H}}$.

We make an assumption on $k$ that will be enforced by construction:
\begin{enumerate}
\item[(A3)] The kernel $k$ belongs to $C^{2}(\Omega \times \Omega,\mathbb{R})$.
\end{enumerate}
Now we can analyse the class of CFs induced by $k$:
\begin{theorem} \label{conjugate}
Assume $\bm{\phi} \in \mathcal{H}^d$ and (A1,3).
Then $\psi$ belongs to $\mathcal{H}_0$, the reproducing kernel Hilbert space with kernel $k_0(\bm{x},\bm{x}') := \nabla_{\bm{x}} \cdot \nabla_{\bm{x}'} k(\bm{x},\bm{x}') + \bm{u}(\bm{x}) \cdot \nabla_{\bm{x}'} k(\bm{x},\bm{x}')  + \bm{u}(\bm{x}') \cdot \nabla_{\bm{x}} k(\bm{x},\bm{x}') + \bm{u}(\bm{x}) \cdot \bm{u}(\bm{x}') k(\bm{x},\bm{x}')$.
\end{theorem}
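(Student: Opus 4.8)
The plan is to exhibit $\psi$ as the image of $\bm\phi \in \mathcal H^d$ under a bounded linear map into a candidate RKHS, and then identify that RKHS via the standard pushforward-of-a-kernel construction. First I would introduce, for each fixed $\bm x \in \Omega$, the feature map $\bm\Phi_{\bm x} : \mathcal H^d \to \mathbb R$ defined by $\bm\Phi_{\bm x}(\bm\phi) := \nabla_{\bm x}\cdot\bm\phi(\bm x) + \bm\phi(\bm x)\cdot\bm u(\bm x)$, so that $\psi(\bm x) = \bm\Phi_{\bm x}(\bm\phi)$. The key technical point is that $\bm\Phi_{\bm x}$ is a \emph{bounded} linear functional on $\mathcal H^d$: evaluation $\bm\phi \mapsto \phi_i(\bm x)$ is bounded by the reproducing property, and evaluation of the partial derivative $\bm\phi \mapsto \partial\phi_i/\partial x_i(\bm x)$ is bounded because (A3) guarantees $k \in C^2$, which by a standard result (e.g.\ Berlinet \& Thomas-Agnan, or Steinwart \& Christmann Lemma 4.34) implies $\partial k(\cdot,\bm x)/\partial x_i \in \mathcal H$ and that $\partial\phi_i/\partial x_i(\bm x) = \langle \phi_i, \partial_i k(\cdot,\bm x)\rangle_{\mathcal H}$; (A1) makes $\bm u(\bm x)$ finite so the multiplication by $u_i(\bm x)$ is harmless. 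Hence by Riesz there is $g_{\bm x} \in \mathcal H^d$ with $\bm\Phi_{\bm x}(\bm\phi) = \langle \bm\phi, g_{\bm x}\rangle_{\mathcal H^d}$, and explicitly the $i$th component of $g_{\bm x}$ is $\partial_i k(\cdot,\bm x) + u_i(\bm x) k(\cdot,\bm x)$.

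Next I would invoke the general fact that if $T : \mathcal H^d \to \mathbb R^\Omega$ is the linear map $(T\bm\phi)(\bm x) = \langle \bm\phi, g_{\bm x}\rangle_{\mathcal H^d}$, then the range of $T$, equipped with the quotient norm $\|\psi\|_{\mathcal H_0} := \inf\{\|\bm\phi\|_{\mathcal H^d} : T\bm\phi = \psi\}$, is an RKHS whose reproducing kernel is $k_0(\bm x,\bm x') = \langle g_{\bm x'}, g_{\bm x}\rangle_{\mathcal H^d}$ (see e.g.\ Berlinet \& Thomas-Agnan, Thm.\ 4, or Paulsen \& Raghupathi on pullback/pushforward kernels). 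Computing $\langle g_{\bm x'}, g_{\bm x}\rangle_{\mathcal H^d} = \sum_{i=1}^d \langle \partial_i k(\cdot,\bm x') + u_i(\bm x')k(\cdot,\bm x'),\ \partial_i k(\cdot,\bm x) + u_i(\bm x)k(\cdot,\bm x)\rangle_{\mathcal H}$ and expanding the four cross terms using the reproducing identities $\langle k(\cdot,\bm x'),k(\cdot,\bm x)\rangle_{\mathcal H} = k(\bm x,\bm x')$, $\langle \partial_i k(\cdot,\bm x'),k(\cdot,\bm x)\rangle_{\mathcal H} = \partial_{x_i} k(\bm x,\bm x')$, and $\langle \partial_i k(\cdot,\bm x'),\partial_i k(\cdot,\bm x)\rangle_{\mathcal H} = \partial_{x_i}\partial_{x_i'}k(\bm x,\bm x')$, then summing over $i$, yields exactly the four terms in the stated $k_0$. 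Finally, since $\bm\phi \in \mathcal H^d$ implies $\psi = T\bm\phi$ lies in the range of $T$, we conclude $\psi \in \mathcal H_0$.

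I would also need to check that $\mathcal H_0 \subset \mathcal L^2(\pi)$ as claimed in the surrounding text (though not strictly part of the theorem statement), which follows because $\psi = f - s_{f,\mathcal D_0} + \mu(s_{f,\mathcal D_0})$-type functions were assumed in $\mathcal L^2(\pi)$; more directly, $\mathcal H \subset \mathcal L^2(\pi)$ by hypothesis and one controls $\int \psi^2 \pi$ by the $\mathcal H$-norms of $\bm\phi$ together with integrability of the derivative- and $\bm u$-weighted kernel evaluations. The main obstacle I anticipate is the rigorous justification that differentiation and inner product commute, i.e.\ that $\partial_i k(\cdot,\bm x) \in \mathcal H$ with $\langle h, \partial_i k(\cdot,\bm x)\rangle_{\mathcal H} = \partial_{x_i} h(\bm x)$ for all $h \in \mathcal H$; this is where assumption (A3) ($k \in C^2$) does the real work, and I would cite the standard RKHS differentiation lemma rather than reprove it, being careful that $C^2$ in both arguments jointly (not merely separately) is what licenses the second-order identity $\langle \partial_i k(\cdot,\bm x'),\partial_i k(\cdot,\bm x)\rangle_{\mathcal H} = (\partial_{x_i}\partial_{x_i'}k)(\bm x,\bm x')$.
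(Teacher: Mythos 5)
Your proposal is correct and follows essentially the same route as the paper: you define the derived feature element $g_{\bm x}$ with components $\partial_i k(\cdot,\bm x)+u_i(\bm x)k(\cdot,\bm x)$ (the paper's $\Phi_i^*(\bm x)$), write $\psi(\bm x)=\langle\bm\phi,g_{\bm x}\rangle_{\mathcal H^d}$ using the RKHS differentiation lemma licensed by (A3), endow the range of this map with the infimum (quotient) norm, invoke the standard feature-map/pushforward theorem (the paper cites Steinwart and Christmann, Thm.~4.21) to get an RKHS with kernel $\langle g_{\bm x},g_{\bm x'}\rangle_{\mathcal H^d}$, and expand to obtain the stated $k_0$. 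The only divergence is your closing aside on $\mathcal H_0\subset\mathcal L^2(\pi)$, which is not part of this theorem and in the paper requires the separate assumption (A4) (Lemma~\ref{finite variance}), not the argument you sketch.
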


To gain some intuition for $\mathcal{H}_0$ we strengthen (A2) as follows:
\begin{enumerate}
\item[(A2')] For $\pi$-almost all $\bm{x} \in \Omega$ the kernel $k$ satisfies
\begin{eqnarray*}
\oint_{\partial\Omega} k(\bm{x},\bm{x}') \pi(\bm{x}') \bm{n}(\bm{x}') S(\mathrm{d}\bm{x}') = \bm{0}
\end{eqnarray*}
and 
\begin{eqnarray*}
\oint_{\partial\Omega} \nabla_{\bm{x}} k(\bm{x},\bm{x}') \pi(\bm{x}') \cdot \bm{n}(\bm{x}') S(\mathrm{d}\bm{x}') = 0.
\end{eqnarray*}
\end{enumerate}
\noindent While (A2') must be verified on a case-by-case basis, it can in principle always be enforced with a suitable choice of $k$. 

\begin{lemma} \label{mean ele lem}
Under (A1,2',3), the gradient-based kernel $k_0$ satisfies
\begin{eqnarray*}
\int_{\Omega} k_0(\bm{x},\bm{x}') \pi(\bm{x}') \mathrm{d}\bm{x}' = 0 \label{mean element}
\end{eqnarray*}
for $\pi$-almost all $\bm{x} \in \Omega$.
\end{lemma}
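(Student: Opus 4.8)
The plan is to reduce the identity to two applications of Stein's identity (Proposition \ref{div prop}), now invoked in the integration variable $\bm{x}'$ rather than in $\bm{x}$. Fix any $\bm{x}$ in the $\pi$-full-measure set on which both displays of (A2$'$) hold. Recalling from Theorem \ref{conjugate} the definition of $k_0$, I would split the integrand into the two pieces that do and do not carry a factor of $\bm{u}(\bm{x})$:
\[
k_0(\bm{x},\bm{x}') \;=\; B(\bm{x},\bm{x}') \;+\; \bm{u}(\bm{x}) \cdot \bm{A}(\bm{x},\bm{x}'),
\]
where $B(\bm{x},\bm{x}') := \nabla_{\bm{x}}\cdot\nabla_{\bm{x}'}k(\bm{x},\bm{x}') + \bm{u}(\bm{x}')\cdot\nabla_{\bm{x}}k(\bm{x},\bm{x}')$ and $\bm{A}(\bm{x},\bm{x}') := \nabla_{\bm{x}'}k(\bm{x},\bm{x}') + \bm{u}(\bm{x}')\,k(\bm{x},\bm{x}')$. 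Since $\bm{u}(\bm{x})$ is constant in $\bm{x}'$ it may be pulled through the $\bm{x}'$-integral, so it suffices to show $\int_\Omega B(\bm{x},\bm{x}')\pi(\bm{x}')\,\mathrm{d}\bm{x}' = 0$ and $\int_\Omega \bm{A}(\bm{x},\bm{x}')\pi(\bm{x}')\,\mathrm{d}\bm{x}' = \bm{0}$.

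The key observation is that each of these integrals is precisely $\mu$, taken in the variable $\bm{x}'$, of a control functional of the form \eqref{cfdef2}, for a suitable choice of vector field $\bm{\phi}$ regarded as a function of $\bm{x}'$ with $\bm{x}$ held fixed. For the scalar integral, take $\bm{\phi}(\bm{x}') := \nabla_{\bm{x}}k(\bm{x},\bm{x}')$; this lies in $C^1(\Omega,\mathbb{R}^d)$ by (A3), and using that the mixed second partials of $k$ commute (again (A3)) one checks $\nabla_{\bm{x}'}\cdot\bm{\phi}(\bm{x}') + \bm{\phi}(\bm{x}')\cdot\bm{u}(\bm{x}') = B(\bm{x},\bm{x}')$. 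The boundary hypothesis (A2) required for this $\bm{\phi}$ is exactly the second display of (A2$'$), so Proposition \ref{div prop} (applied in $\bm{x}'$) gives $\int_\Omega B(\bm{x},\bm{x}')\pi(\bm{x}')\,\mathrm{d}\bm{x}' = 0$. For the vector integral I argue componentwise: for each $j$, take $\bm{\phi}(\bm{x}') := \bm{e}_j\,k(\bm{x},\bm{x}')$ with $\bm{e}_j$ the $j$-th coordinate vector; then $\nabla_{\bm{x}'}\cdot\bm{\phi}(\bm{x}') + \bm{\phi}(\bm{x}')\cdot\bm{u}(\bm{x}')$ equals the $j$-th component of $\bm{A}(\bm{x},\bm{x}')$, and the corresponding instance of (A2) is the $j$-th component of the first display of (A2$'$). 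Proposition \ref{div prop} then kills the $j$-th component of $\int_\Omega \bm{A}\pi\,\mathrm{d}\bm{x}'$, and since $j$ was arbitrary the whole vector vanishes. Combining the two pieces yields $\int_\Omega k_0(\bm{x},\bm{x}')\pi(\bm{x}')\,\mathrm{d}\bm{x}' = 0$, and as $\bm{x}$ ranged over a $\pi$-full-measure set the claim follows.

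There is no substantial obstacle here; the content is recognising that $k_0$ is assembled from two nested copies of the Stein operator and that (A2$'$) is nothing more than (A2) specialised to the vector fields $\bm{e}_j\,k(\bm{x},\cdot)$ and $\nabla_{\bm{x}}k(\bm{x},\cdot)$. The only points needing care are the commutation of $\nabla_{\bm{x}}$ and $\nabla_{\bm{x}'}$ acting on $k$ and the verification that the two choices of $\bm{\phi}$ meet the $C^1$ regularity demanded by Proposition \ref{div prop} — both supplied by (A3). When $\Omega$ is unbounded the surface integrals in (A2) and (A2$'$) are read as the tail conditions described after Proposition \ref{div prop}, and the argument is otherwise unchanged.
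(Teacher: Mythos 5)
Your proposal is correct and follows essentially the same route as the paper: the paper performs the identical decomposition of $k_0$ into the part carrying $\bm{u}(\bm{x})$ and the part without it, regroups each via the product rule, and applies the divergence theorem so that the two displays of (A2$'$) kill the resulting surface integrals. Your only difference is organisational --- you package the two divergence-theorem steps as applications of Proposition \ref{div prop} in the $\bm{x}'$ variable (with $\bm{\phi} = \nabla_{\bm{x}}k(\bm{x},\cdot)$ and $\bm{\phi} = \bm{e}_j k(\bm{x},\cdot)$) rather than carrying them out explicitly, which is a sound and arguably cleaner presentation of the same argument.
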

\noindent Lemma \ref{mean ele lem} generalises Eqn. 1 of \cite{Mira} and implies that $\mathcal{H}_0$ consists of only valid CFs, i.e. $\psi \in \mathcal{H}_0 \implies \mu(\psi) = 0$.
These ideas are illustrated in Fig. \ref{space}.

\begin{figure}[t!]
\centering
\includegraphics[width = \textwidth,clip,trim = 0.2cm 0cm 0cm 0cm]{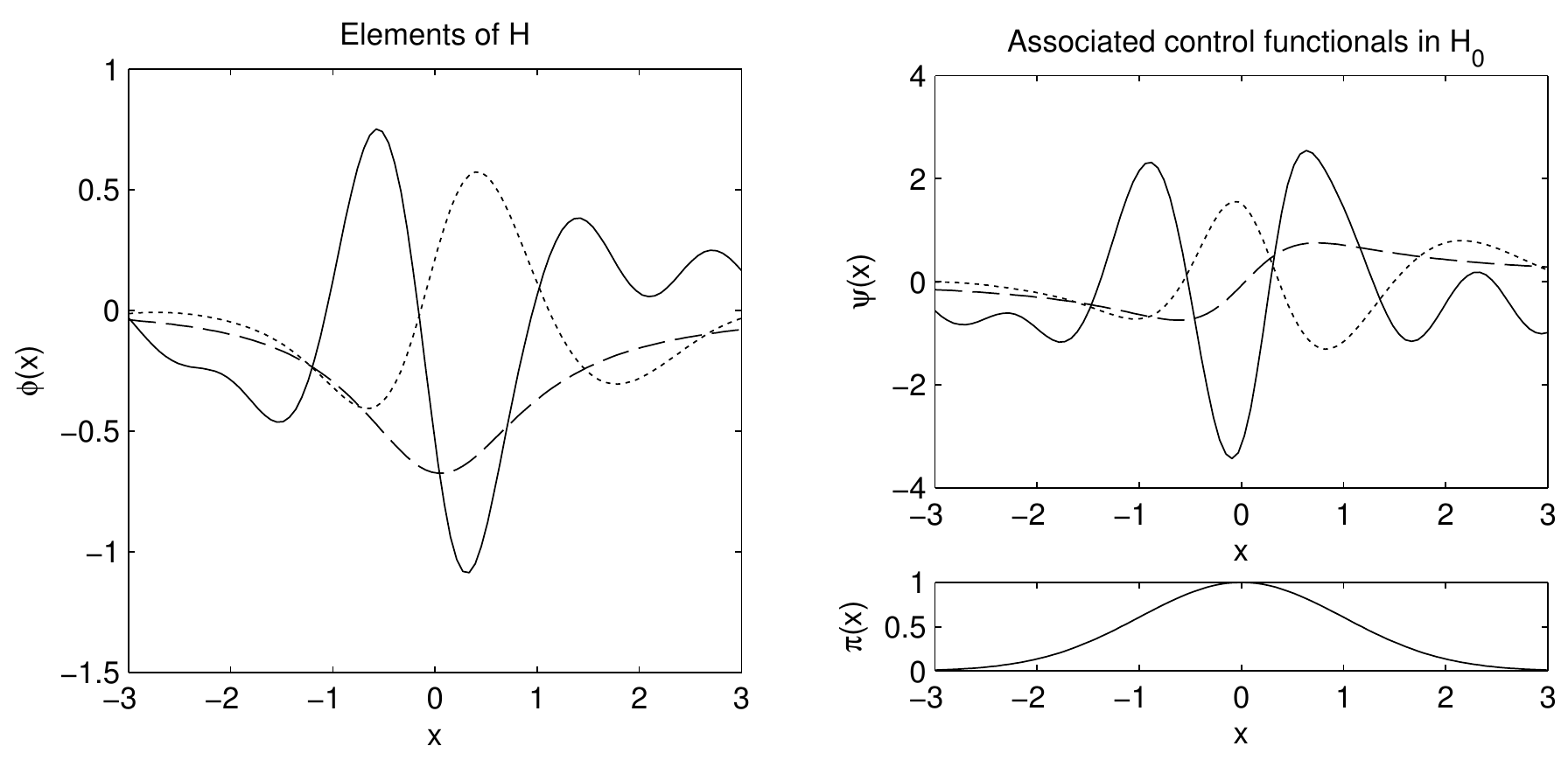}
\caption{Constructing control functionals (in dimension $d=1$):
Representative elements $\phi$ from the reproducing kernel Hilbert space $\mathcal{H}$ (left panel) are plotted, along with their associated control functionals $\psi = \nabla \phi + \phi \nabla \log \pi$ in $\mathcal{H}_0$ (right, top panel).
Each $\phi$ is unconstrained in expectation, but the corresponding control functional $\psi$ is automatically constrained to have expectation zero with respect to the (possibly un-normalised) probability density $\pi$ (right, bottom panel).}
\label{space}
\end{figure}

\begin{enumerate}
\item[(A4)] The gradient-based kernel $k_0$ satisfies
\begin{eqnarray*}
\int_{\Omega} k_0(\bm{x},\bm{x}) \pi(\bm{x}) \mathrm{d}\bm{x} < \infty.
\end{eqnarray*}
\end{enumerate}

\begin{lemma} \label{finite variance}
Under (A1,2',3,4) we have $\mathcal{H}_0 \subset \mathcal{L}^2(\pi)$.
\end{lemma}
\begin{remark}
In general (A4) must be verified on a case-by-case basis. (A4) is easily verified for all examples in this paper.
\end{remark}

\subsubsection{Consistent approximation and asymptotics}

Now we establish theoretical results for consistent approximation of $f$ by $s_{f,\mathcal{D}_0}$.
Write $\mathcal{C}$ for the reproducing kernel Hilbert space of constant functions with kernel $k_{\mathcal{C}}(\bm{x},\bm{x}') = 1$ for all $\bm{x},\bm{x}' \in \Omega$.
Denote the norms associated to $\mathcal{C}$ and $\mathcal{H}_0$ respectively by $\|\cdot\|_{\mathcal{C}}$ and $\|\cdot\|_{\mathcal{H}_0}$.
Write $\mathcal{H}_+ = \mathcal{C} + \mathcal{H}_0$ for the set $\{c + \psi : c \in \mathcal{C}, \; \psi \in \mathcal{H}_0\}$.
Equip $\mathcal{H}_+$ with the structure of a vector space, with addition operator $(c + \psi) + (c' + \psi') = (c + c') + (\psi + \psi')$ and multiplication operator $\lambda(c + \psi) = (\lambda c) + (\lambda \psi)$, each well-defined due to uniqueness of the representation $f = c + \psi$, $f' = c' + \psi'$ with $c,c' \in \mathcal{C}$ and $\psi,\psi' \in \mathcal{H}_0$.
In addition, equip $\mathcal{H}_+$ with the norm $\|f\|_{\mathcal{H}_+}^2 := \|c\|_{\mathcal{C}}^2 + \|\psi\|_{\mathcal{H}_0}^2$, again, well-defined by uniqueness of representation.
It can be shown that $\mathcal{H}_+$ is a reproducing kernel Hilbert space with kernel $k_+(\bm{x},\bm{x}') := k_{\mathcal{C}}(\bm{x},\bm{x}') + k_0(\bm{x},\bm{x}')$ \citep[][Thm. 5, p24]{Berlinet}.

For the analysis we assume a basic well-posedness condition:
\begin{enumerate}
\item[(A5)] $f \in \mathcal{H}_+$. i.e. $f = c + \psi$ for some $c \in \mathcal{C}$ and $\psi \in \mathcal{H}_0$.
\end{enumerate}

\begin{remark}
(A5) is equivalent to the existence of a solution $\bm{\phi} \in \mathcal{H}^d$ to the partial differential equation
\begin{eqnarray*}
\nabla_{\bm{x}} \cdot [\pi(\bm{x}) \bm{\phi}(\bm{x})] = [f(\bm{x}) - \mu(f)] \pi(\bm{x}), 
\end{eqnarray*}
called the ``fundamental equation'' in \citet[][Eqn. 5; see also Eqn. 4 in \cite{Mira}]{Assaraf}.
With no initial or boundary conditions to satisfy, it is easy to show that there exist infinitely many solutions to the fundamental equation, so (A5) is automatically satisfied by choosing $k$ such that $\mathcal{H}$ is big enough to contain at least one solution.
\end{remark}

To realise the CF method we consider the regularised least-squares (RLS) functional approximation given by
\begin{eqnarray*}
s_{f,\mathcal{D}_0} := \underset{g \in \mathcal{H}_+}{\arg\min} \left\{ \frac{1}{m} \sum_{j=1}^m (f(\bm{x}_j) - g(\bm{x}_j))^2 + \lambda \|g\|_{\mathcal{H}_+}^2 \right\} \label{RLS estimator}
\end{eqnarray*}
where $\lambda > 0$.
For the special case where $m = n$, the CF estimator can be interpreted as kernel quadrature \citep{Sommariva} and also as empirical interpolation \citep{Kristoffersen}.
The distinguishing feature of CFs from these methods is that the Stein construction is compatible with un-normalised $\pi$.

Below we will establish that the RLS estimate produces vanishing EFAE under a strengthening of (A4):
\begin{enumerate}
\item[(A4')] $\sup_{\bm{x} \in \Omega} k_0(\bm{x},\bm{x}) < \infty$.
\end{enumerate}
\begin{remark} \label{sup condition}
(A4') would follow from (A3) and compactness of the state space $\Omega$, but we do not assume compactness here.
All experiments in this paper have at worst $\bm{u}(\bm{x}) = O(\|\bm{x}\|_2)$, so that (A4') is automatically satisfied, for example, when we choose $k(\bm{x},\bm{x}') = (1+\alpha_1\|\bm{x}\|_2^2 + \alpha_1\|\bm{x}'\|_2^2)^{-1} \exp(-(2\alpha_2^2)^{-1}\|\bm{x}-\bm{x}'\|_2^2)$ for some $\alpha_1,\alpha_2>0$.
\end{remark}
We can now state our main result:
\begin{theorem} \label{sun proof}
Assume (A1,2',3,4',5) and take a RLS estimate with $\lambda = O(m^{-1/2})$.
When $\mathcal{D}_0$ are IID samples from $\pi$, the estimator $\hat{\mu}(\mathcal{D}_0,\mathcal{D}_1;f)$ is an unbiased estimator of $\mu(f)$ with $\mathbb{E}_{\mathcal{D}_0}\mathbb{E}_{\mathcal{D}_1}[(\hat{\mu}(\mathcal{D}_0,\mathcal{D}_1;f) - \mu(f))^2] = O(n^{-7/6})$.
\end{theorem}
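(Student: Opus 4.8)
The plan is to derive Theorem~\ref{sun proof} from Proposition~\ref{theo1}. It suffices to verify the hypothesis of Proposition~\ref{theo1} with $\gamma = 1$ --- legitimate because the standing assumption after that proposition is $m/n \to r \in (0,1)$, so also $n-m \asymp n$ --- and with the expected functional approximation error obeying $\mathbb{E}_{\mathcal{D}_0}[\sigma^2(f - s_{f,\mathcal{D}_0})] = O(m^{-1/6})$, i.e.\ $\delta = 1/6$; the proposition then returns $O(n^{-1-\gamma\delta}) = O(n^{-7/6})$. Unbiasedness is a separate and immediate matter: the RLS minimiser lies in $\mathcal{H}_+ = \mathcal{C} + \mathcal{H}_0$ (an RKHS with kernel $k_+$, via Theorem~\ref{conjugate}) and hence decomposes uniquely as $s_{f,\mathcal{D}_0} = c + \psi$ with $c \in \mathcal{C}$, $\psi \in \mathcal{H}_0$; Lemma~\ref{mean ele lem} (using (A2')) gives $\mu(\psi) = 0$, so $\mu(s_{f,\mathcal{D}_0}) = c$ is a known quantity, $\hat\mu$ is a genuine estimator, and $\mu(f_{\mathcal{D}_0}) = \mu(f)$ by construction. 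Since $\mathcal{D}_1$ is an IID $\pi$-sample independent of $\mathcal{D}_0$, $\mathbb{E}_{\mathcal{D}_1}[\hat\mu \mid \mathcal{D}_0] = \mu(f)$, hence $\mathbb{E}_{\mathcal{D}_0}\mathbb{E}_{\mathcal{D}_1}[\hat\mu] = \mu(f)$; and by the same conditional unbiasedness and independence the mean squared error equals $(n-m)^{-1}\mathbb{E}_{\mathcal{D}_0}[\sigma^2(f - s_{f,\mathcal{D}_0})]$, which is exactly what Proposition~\ref{theo1} converts into the stated rate.

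For the EFAE I would bound $\sigma^2(f - s_{f,\mathcal{D}_0}) \le \|f - s_{f,\mathcal{D}_0}\|_{\mathcal{L}^2(\pi)}^2$ and exploit optimality of the RLS minimiser against the feasible competitor $g = f$ (feasible by (A5)), which gives simultaneously $\|s_{f,\mathcal{D}_0}\|_{\mathcal{H}_+} \le \|f\|_{\mathcal{H}_+}$ and $\frac1m \sum_{j=1}^m (f(\bm{x}_j) - s_{f,\mathcal{D}_0}(\bm{x}_j))^2 \le \lambda \|f\|_{\mathcal{H}_+}^2$. Thus $h := f - s_{f,\mathcal{D}_0}$ lies in the \emph{fixed} ball $B := \{ g \in \mathcal{H}_+ : \|g\|_{\mathcal{H}_+} \le 2\|f\|_{\mathcal{H}_+} \}$, independently of $\mathcal{D}_0$, and splitting the population norm into its empirical part plus a fluctuation yields the deterministic bound
\begin{equation*}
\|f - s_{f,\mathcal{D}_0}\|_{\mathcal{L}^2(\pi)}^2 \;\le\; \lambda \|f\|_{\mathcal{H}_+}^2 + \Delta_m , \qquad \Delta_m := \sup_{g \in B}\Bigl( \|g\|_{\mathcal{L}^2(\pi)}^2 - \tfrac1m \sum_{j=1}^m g(\bm{x}_j)^2 \Bigr).
\end{equation*}
Taking $\mathbb{E}_{\mathcal{D}_0}$ (the $\bm{x}_j$ being IID from $\pi$) reduces the problem to bounding $\mathbb{E}_{\mathcal{D}_0}[\Delta_m]$, a uniform law of large numbers over a fixed RKHS ball.

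To control $\mathbb{E}_{\mathcal{D}_0}[\Delta_m]$, assumption (A4') --- $\kappa := \sup_{\bm{x}} k_+(\bm{x},\bm{x}) < \infty$, with $k_+ = k_{\mathcal{C}} + k_0$ so that (A4') applies --- makes every $g \in B$ bounded in absolute value by $2\sqrt{\kappa}\,\|f\|_{\mathcal{H}_+}$ and, together with Lemma~\ref{finite variance} (which needs (A4)), makes the integral operator $L$ of $k_+$ on $\mathcal{L}^2(\pi)$ trace class with $\mathrm{tr}(L) \le \kappa$. A covering-number (chaining) estimate for $B$ in a supremum-type metric on $\Omega$, a union bound of Hoeffding/Bernstein type over the net, and a balance of the net radius against the stochastic fluctuation then give $\mathbb{E}_{\mathcal{D}_0}[\Delta_m] = O(m^{-1/6})$. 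Since $\lambda = O(m^{-1/2}) = O(m^{-1/6})$, the regularisation term $\lambda \|f\|_{\mathcal{H}_+}^2$ is of lower order, whence $\mathbb{E}_{\mathcal{D}_0}[\sigma^2(f - s_{f,\mathcal{D}_0})] = O(m^{-1/6})$; inserting $\delta = 1/6$, $\gamma = 1$ into Proposition~\ref{theo1} finishes the argument.

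The main obstacle is this last estimate $\mathbb{E}_{\mathcal{D}_0}[\Delta_m] = O(m^{-1/6})$. Because $\Omega$ need not be compact, one cannot invoke compactness of $B$ and must lean entirely on the uniform kernel bound (A4'); moreover $B$ is known only to be pointwise bounded and, through trace-classness of $L$, to have controlled $\mathcal{L}^2(\pi)$-metric entropy, so passing from the empirical to the population $\mathcal{L}^2(\pi)$ norm --- rather than bounding in the far stronger, and here far too lossy, $\mathcal{H}_+$-norm --- requires care, and it is precisely this trade-off of net radius against fluctuation, at the prescribed scaling $\lambda = m^{-1/2}$, that produces the somewhat unusual exponent $\tfrac76$ (which is not claimed sharp). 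Two subsidiary points that must not be skipped are that Lemma~\ref{mean ele lem} (hence (A2')) is what equates $\mu(s_{f,\mathcal{D}_0})$ with the fitted constant $c$, making the estimator exactly rather than merely asymptotically unbiased, and that Theorem~\ref{conjugate} with Lemma~\ref{finite variance} guarantees $s_{f,\mathcal{D}_0} \in \mathcal{H}_+ \subset \mathcal{L}^2(\pi)$, so that every norm appearing above is finite.
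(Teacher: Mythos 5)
Your skeleton --- reduce to Proposition \ref{theo1} with $\gamma=1$, get unbiasedness from the decomposition $s_{f,\mathcal{D}_0}=c+\psi$ with $\mu(\psi)=0$ via Lemma \ref{mean ele lem} and integrability via Lemma \ref{finite variance}, then feed an EFAE bound $O(m^{-1/6})$ back into the proposition --- is exactly the paper's structure, and the unbiasedness/MSE bookkeeping is fine. But you diverge at the decisive step, and that is where the gap lies. The paper does not run an empirical-process argument at all: it verifies the two hypotheses of Theorem 1.1 of \cite{Sun1}, namely $\sup_{\bm{x}}k_+(\bm{x},\bm{x})=1+\sup_{\bm{x}}k_0(\bm{x},\bm{x})<\infty$ from (A4') and the source condition $T^{-1/2}f\in L^2(\pi)$, which follows from (A5) combined with Prop.\ 3.3 of \cite{Sun1} (the isometry $\|T^{-1/2}h\|_{L^2(\pi)}=\|h\|_{\mathcal{H}_+}$ for $h\in\mathcal{H}_+$), and then imports the rate $\mathbb{E}_{\mathcal{D}_0}[\sigma^2(f-s_{f,\mathcal{D}_0})]=O(m^{-1/6})$ for $\lambda=O(m^{-1/2})$ wholesale. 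The exponent $7/6$ comes from that cited theorem; it is not the outcome of a net-radius versus fluctuation trade-off that you would have to engineer yourself.

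The concrete gap in your version is that the central estimate $\mathbb{E}_{\mathcal{D}_0}[\Delta_m]=O(m^{-1/6})$ is asserted rather than proved, and the route you sketch toward it is precisely where the problem is delicate. A covering/chaining bound ``in a supremum-type metric on $\Omega$'' is not available in general: with $\Omega$ non-compact the ball $B$ is uniformly bounded by $2\sqrt{\kappa}\,\|f\|_{\mathcal{H}_+}$ thanks to (A4'), but it need not be totally bounded in the sup norm (translates of the kernel already give an infinite separated set), so its sup-metric covering numbers can be infinite; and trace-classness of the integral operator controls $L^2(\pi)$-entropy, which does not by itself control the empirical fluctuation. Your basic inequality from RLS optimality against the competitor $g=f$ (legitimate by (A5)) is sound, and the argument can be completed, but by different tools than you name: symmetrisation, a Lipschitz contraction for $t\mapsto t^2$ on the bounded range of $B$, and the standard bound $R\sqrt{\kappa/m}$ for the Rademacher complexity of an RKHS ball of radius $R$ with bounded kernel give $\mathbb{E}_{\mathcal{D}_0}[\Delta_m]=O(m^{-1/2})$, hence EFAE $O(m^{-1/2})$ and an MSE bound at least as strong as the stated $O(n^{-7/6})$. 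So the proposal is repairable and would even be self-contained where the paper leans on \cite{Sun1}, but as written the key quantitative step is missing and the suggested mechanism for it would founder on non-compactness of $\Omega$.
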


CFs based on RLS therefore improve upon the Monte Carlo rate.
The hypotheses on $\pi$ are weak, only requiring that $\pi$ be continuously differentiable.
Empirical evidence (below) indicates stronger rates hold in more regular examples.
Indeed we can prove sharper results under stronger conditions that include boundedness of $\Omega$.
Details are reserved for a future publication \citep{Oates7}.

Importantly, the RLS estimate leads to a convenient closed-form expression for the CF estimator:
\begin{lemma} \label{explicit formulae}
Assume (A1,3).
The CF estimator based on RLS is
\begin{eqnarray}
\hat{\mu}(\mathcal{D}_0,\mathcal{D}_1;f) = \underbrace{\frac{1}{n-m} \bm{1}^T (\bm{f}_1 - \hat{\bm{f}}_1)}_{(*)} + \underbrace{\frac{\bm{1}^T (\bm{K}_0 + \lambda m \bm{I})^{-1} \bm{f}_0}{1 + \bm{1}^T (\bm{K}_0 + \lambda m \bm{I})^{-1} \bm{1}}}_{(**)}  \label{rewrite}
\end{eqnarray}
where $\bm{f}_0 = [f(\bm{x}_{1}),\dots,f(\bm{x}_m)]^T$, $\bm{f}_1 = [f(\bm{x}_{m+1}),\dots,f(\bm{x}_n)]^T$, $\bm{1} = [1,\dots,1]^T$, $(\bm{K}_0)_{i,j} = k_0(\bm{x}_{i},\bm{x}_{j})$ and the vector
\begin{eqnarray*}
\hat{\bm{f}}_1 := \bm{K}_{1,0} (\bm{K}_0 + \lambda m \bm{I})^{-1} \bm{f}_0 + (\bm{1} - \bm{K}_{1,0} (\bm{K}_0 + \lambda m \bm{I})^{-1}\bm{1}) \left( \frac{\bm{1}^T (\bm{K}_0 + \lambda m \bm{I})^{-1}\bm{f}_0}{1 + \bm{1}^T (\bm{K}_0 + \lambda m \bm{I})^{-1} \bm{1}} \right) 
\end{eqnarray*}
contains predictions for $\bm{f}_1$ based only on $\mathcal{D}_0$, with $(\bm{K}_{1,0})_{i,j} = k_0(\bm{x}_{m+i},\bm{x}_{j})$.
\end{lemma}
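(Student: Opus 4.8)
The plan is to collapse the infinite-dimensional RLS problem over $\mathcal{H}_+$ to a finite-dimensional ridge regression in $\mathbb{R}^m$ via a representer theorem, and then to read the two terms of Eqn.~\ref{rewrite} off the solution using the Sherman--Morrison formula. By Theorem~\ref{conjugate}, $\mathcal{H}_0$ is an RKHS with kernel $k_0$, so $\mathcal{H}_+ = \mathcal{C} + \mathcal{H}_0$ is an RKHS with kernel $k_+ = k_{\mathcal{C}} + k_0 = 1 + k_0$ (this is the only role of (A3) beyond (A1)). First I would invoke the standard representer argument: writing any $g\in\mathcal{H}_+$ as a component in $\mathrm{span}\{k_+(\cdot,\bm{x}_j)\}_{j=1}^m$ plus an orthogonal remainder, the remainder leaves the $m$ fitted values $g(\bm{x}_j)$ unchanged (reproducing property) and can only increase $\|g\|_{\mathcal{H}_+}^2$, so the minimiser has the form $s_{f,\mathcal{D}_0}(\bm{x}) = \sum_{j=1}^m \beta_j k_+(\bm{x},\bm{x}_j) = \bm{1}^T\bm{\beta} + \sum_{j=1}^m \beta_j k_0(\bm{x},\bm{x}_j)$ for some $\bm{\beta}\in\mathbb{R}^m$. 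By uniqueness of the representation $s_{f,\mathcal{D}_0} = c + \psi$ with $c\in\mathcal{C}$, $\psi\in\mathcal{H}_0$, the constant component is $c = \bm{1}^T\bm{\beta}$, and (Proposition~\ref{div prop}, i.e.\ $\mu(\psi)=0$ for $\psi\in\mathcal{H}_0$) the value $\mu(s_{f,\mathcal{D}_0})$ used by the method is this same $c$.

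Next I would solve the reduced problem. With $\bm{K}_+ = \bm{1}\bm{1}^T + \bm{K}_0$ the Gram matrix of $k_+$ on $\mathcal{D}_0$, one has $\|s_{f,\mathcal{D}_0}\|_{\mathcal{H}_+}^2 = \bm{\beta}^T\bm{K}_+\bm{\beta}$ and fitted values $\bm{K}_+\bm{\beta}$, so the RLS objective becomes $m^{-1}\|\bm{f}_0 - \bm{K}_+\bm{\beta}\|_2^2 + \lambda\,\bm{\beta}^T\bm{K}_+\bm{\beta}$, a convex quadratic whose stationarity condition $\bm{K}_+\left[(\bm{K}_+ + \lambda m\bm{I})\bm{\beta} - \bm{f}_0\right] = \bm{0}$ is solved by $\bm{\beta} = (\bm{K}_+ + \lambda m\bm{I})^{-1}\bm{f}_0 = (\bm{A} + \bm{1}\bm{1}^T)^{-1}\bm{f}_0$, where $\bm{A} := \bm{K}_0 + \lambda m\bm{I}$ is invertible since $\bm{K}_0\succeq 0$ and $\lambda m > 0$. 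Applying Sherman--Morrison, $\bm{\beta} = \bm{A}^{-1}\bm{f}_0 - (1 + \bm{1}^T\bm{A}^{-1}\bm{1})^{-1}\bm{A}^{-1}\bm{1}\bm{1}^T\bm{A}^{-1}\bm{f}_0$; left-multiplying by $\bm{1}^T$ and simplifying gives $c = \bm{1}^T\bm{\beta} = \bm{1}^T\bm{A}^{-1}\bm{f}_0 / (1 + \bm{1}^T\bm{A}^{-1}\bm{1})$, which is exactly term $(**)$.

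Finally I would compute the predictions at $\mathcal{D}_1$, namely $\hat{\bm{f}}_1 := (s_{f,\mathcal{D}_0}(\bm{x}_{m+1}),\dots,s_{f,\mathcal{D}_0}(\bm{x}_n))^T = c\bm{1} + \bm{K}_{1,0}\bm{\beta}$; substituting the Sherman--Morrison form of $\bm{\beta}$ and the expression for $c$ yields $\hat{\bm{f}}_1 = \bm{K}_{1,0}\bm{A}^{-1}\bm{f}_0 + (\bm{1} - \bm{K}_{1,0}\bm{A}^{-1}\bm{1})\,c$, which is the displayed formula for $\hat{\bm{f}}_1$. Then $\hat{\mu}(\mathcal{D}_0,\mathcal{D}_1;f) = (n-m)^{-1}\sum_{i=m+1}^n\left[f(\bm{x}_i) - s_{f,\mathcal{D}_0}(\bm{x}_i) + c\right] = (n-m)^{-1}\bm{1}^T(\bm{f}_1 - \hat{\bm{f}}_1) + c = (*) + (**)$, with the $m=n$ case recovered by the empty-sum convention $(*)=0$ and $\hat{\mu} = \mu(s_{f,\mathcal{D}_0}) = c$.

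The only non-routine step is the representer reduction for the sum space $\mathcal{H}_+$ together with the bookkeeping of the constant component: one must be careful that the RLS minimiser over $\mathcal{H}_+$ genuinely lies in $\mathrm{span}\{k_+(\cdot,\bm{x}_j)\}_{j=1}^m$, that $\bm{1}^T\bm{\beta}$ is the correct value of $\mu(s_{f,\mathcal{D}_0})$, and that the Gram matrix of $k_+$ is $\bm{1}\bm{1}^T + \bm{K}_0$ so that Sherman--Morrison applies; everything downstream is that identity and rearrangement.
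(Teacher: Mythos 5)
Your proposal is correct, and it follows the same two core mechanisms as the paper's proof (a representer-theorem reduction to a finite-dimensional quadratic, followed by a rank-one matrix-inversion identity), but the bookkeeping is organised differently. The paper decomposes $g = c + \psi$ from the outset, applies the representer theorem inside $\mathcal{H}_0$ for fixed $c$ so that $\psi(\bm{x}) = \sum_i \beta_i k_0(\bm{x}_i,\bm{x})$, and then jointly optimises over the pair $(c,\bm{\beta})$ by differentiation, invoking Woodbury to solve the resulting coupled system; moreover its written argument is phrased for the exact-interpolation constraints (effectively the $\lambda \searrow 0$ case), with the regularised matrices appearing only in the statement. You instead apply the representer theorem once in the sum space $\mathcal{H}_+$ with kernel $k_+ = 1 + k_0$, obtain the standard kernel ridge solution $\bm{\beta} = (\bm{K}_+ + \lambda m \bm{I})^{-1}\bm{f}_0$ for genuine $\lambda > 0$, and recover the constant afterwards as $c = \bm{1}^T\bm{\beta}$ via Sherman--Morrison on $\bm{K}_0 + \lambda m \bm{I} + \bm{1}\bm{1}^T$; your extraction of $c$ legitimately relies on the uniqueness of the decomposition $\mathcal{H}_+ = \mathcal{C} + \mathcal{H}_0$, which the paper asserts when defining $\|\cdot\|_{\mathcal{H}_+}$, and your algebra for $c$, $\hat{\bm{f}}_1$ and the final weighted form reproduces Eqn.~\ref{rewrite} exactly. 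The single-parametrisation route buys a derivation that matches the regularised statement of the lemma verbatim and avoids the block stationarity system, at the cost of the small extra argument identifying the constant component; the paper's route makes the roles of $c$ and $\psi$ explicit throughout, which is convenient for the later error analysis in Theorem~\ref{preasy}.
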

\begin{remark}
The estimator is a weighted combination of function values $\bm{f} = [\bm{f}_0^T,\bm{f}_1^T]^T$ with weights summing to one.
Estimates are readily obtained using standard matrix algebra.
Moreover the weights are independent of the test function $f$ and can be re-used to estimate multiple expectations $\mu(f_j)$ for a collection $\{f_j\}$.
\end{remark}
\begin{remark}
The samples $\mathcal{D}_1$ enter only through the term $(*)$ in Eqn. \ref{rewrite}, which vanishes in probability as $m \rightarrow \infty$.
Thus any randomness due to $\mathcal{D}_1$ vanishes and this gives another perspective on the source of super-root-$n$ convergence of the estimator.
\end{remark}
\begin{remark}
The term $(**)$ in Eqn. \ref{rewrite} is algebraically equivalent to BMC based on $\mathcal{H}_+$.
i.e. $(**)$ is the posterior mean for $\mu(f)$ based on a Gaussian process (GP) prior $f \sim \mathcal{GP}(0,k_+)$ and data $\mathcal{D}_0$ \citep[][Eqn. 9]{Rasmussen}.
Our general construction in Lemma \ref{explicit formulae} therefore ``heals'' BMC in the sense of (i) de-biasing the BMC estimator, (ii) generalising BMC to un-normalised densities and (iii) remaining agnostic to statistical paradigm (e.g. frequentist vs. Bayesian).
\end{remark}

\subsubsection{Non-asymptotic bounds} \label{preasy sec}

The naive computational complexity associated with the RLS estimate is $O(m^3)$ due to the solution of an $m \times m$ linear system.
In situations where $\bm{X}$ is expensive to simulate or $f$ is expensive to evaluate, $m$ is necessarily small and this additional computational cost will be negligible relative to model-based computation.
In such scenarios we are more interested in non-asymptotic behaviour:
\begin{theorem} \label{preasy}
Assume (A1,2',3,5). Let $\lambda \searrow 0$ to simplify presentation. Then 
\begin{eqnarray*}
|\hat{\mu}(\mathcal{D}_0,\mathcal{D}_1;f) - \mu(f) | \leq D(\mathcal{D}_0,\mathcal{D}_1)^{1/2} \|f\|_{\mathcal{H}_+}
\end{eqnarray*}
where
\begin{eqnarray*}
D(\mathcal{D}_0,\mathcal{D}_1)  = \frac{1}{(n-m)^2} \left[\frac{(\bm{1}^T \bm{K}_{1,0} \bm{K}_0^{-1} \bm{1})^2}{1 + \bm{1}^T\bm{K}_0^{-1}\bm{1}} - \bm{1}^T\bm{K}_{1,0}\bm{K}_0^{-1} \bm{K}_{0,1}\bm{1} + \bm{1}^T\bm{K}_1\bm{1} \right].
\end{eqnarray*}
Here $(\bm{K}_1)_{i,j} = k_0(\bm{x}_{m+i},\bm{x}_{m+j})$ and $\bm{K}_{0,1} = \bm{K}_{1,0}^T$.
\end{theorem}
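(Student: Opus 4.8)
The plan is to exhibit $\hat\mu(\mathcal{D}_0,\mathcal{D}_1;f)-\mu(f)$ as a single inner product in $\mathcal{H}_+$ against a deterministic element, apply the Cauchy--Schwarz inequality, and then evaluate the resulting squared RKHS norm in closed form by exploiting the additive structure $k_+=k_{\mathcal{C}}+k_0$.

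First I would rewrite the error. By (A5) and uniqueness of the decomposition write $f=c_f+\psi_f$, and (since the RLS solution lies in $\mathcal{H}_+$) write $s_{f,\mathcal{D}_0}=c_s+\psi_s$, with $c_f,c_s\in\mathcal{C}$ and $\psi_f,\psi_s\in\mathcal{H}_0$. By Lemma \ref{mean ele lem} (which requires (A2')) one has $\mu(\psi_f)=\mu(\psi_s)=0$, hence $\mu(f)=c_f$ and $\mu(s_{f,\mathcal{D}_0})=c_s$. Inserting the definition of $f_{\mathcal{D}_0}$ into $\hat\mu$ and cancelling the common constant gives $\hat\mu-\mu(f)=(n-m)^{-1}\sum_{i=m+1}^n[\psi_f(\bm x_i)-\psi_s(\bm x_i)]$. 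Applying the reproducing property of $k_0$ in $\mathcal{H}_0$ and averaging, and noting that $\bar k:=(n-m)^{-1}\sum_{i=m+1}^n k_0(\cdot,\bm x_i)\in\mathcal{H}_0$ (so its $\mathcal{C}$-part vanishes and the $\mathcal{H}_0$- and $\mathcal{H}_+$-inner products agree on it), one obtains
\[ \hat\mu(\mathcal{D}_0,\mathcal{D}_1;f)-\mu(f)=\langle\psi_f-\psi_s,\bar k\rangle_{\mathcal{H}_0}=\langle f-s_{f,\mathcal{D}_0},\bar k\rangle_{\mathcal{H}_+}. \]

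Second, I would bring in orthogonality and compute. With $\lambda\searrow0$ the RLS solution is the minimum-$\mathcal{H}_+$-norm interpolant of $f$ on $\mathcal{D}_0$, i.e.\ $s_{f,\mathcal{D}_0}=\Pi_V f$, the $\mathcal{H}_+$-orthogonal projection onto $V:=\mathrm{span}\{k_+(\cdot,\bm x_j)\}_{j=1}^m$; consequently $f-s_{f,\mathcal{D}_0}\perp_{\mathcal{H}_+}V$ and $\|f-s_{f,\mathcal{D}_0}\|_{\mathcal{H}_+}\le\|f\|_{\mathcal{H}_+}$. Replacing $\bar k$ by $\bar k-\Pi_V\bar k$ in the identity above (legitimate by orthogonality) and applying Cauchy--Schwarz gives $|\hat\mu-\mu(f)|\le\|f\|_{\mathcal{H}_+}\,\|\bar k-\Pi_V\bar k\|_{\mathcal{H}_+}$, so it remains to verify $D(\mathcal{D}_0,\mathcal{D}_1)=\|\bar k-\Pi_V\bar k\|_{\mathcal{H}_+}^2=\|\bar k\|_{\mathcal{H}_+}^2-\|\Pi_V\bar k\|_{\mathcal{H}_+}^2$. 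Since $k_+(\bm x,\bm x')=1+k_0(\bm x,\bm x')$, one has $\|\bar k\|_{\mathcal{H}_+}^2=(n-m)^{-2}\bm 1^T\bm K_1\bm 1$, the Gram matrix of the generators of $V$ is $\bm 1\bm 1^T+\bm K_0$, and the coordinates $\langle\bar k,k_+(\cdot,\bm x_j)\rangle_{\mathcal{H}_+}=\bar k(\bm x_j)$ assemble into $(n-m)^{-1}\bm K_{0,1}\bm 1$; hence $\|\Pi_V\bar k\|_{\mathcal{H}_+}^2=(n-m)^{-2}\bm 1^T\bm K_{1,0}(\bm K_0+\bm 1\bm 1^T)^{-1}\bm K_{0,1}\bm 1$. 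Expanding $(\bm K_0+\bm 1\bm 1^T)^{-1}$ by the Sherman--Morrison formula and subtracting from $\|\bar k\|_{\mathcal{H}_+}^2$ reproduces exactly the bracket defining $D(\mathcal{D}_0,\mathcal{D}_1)$.

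The main obstacle is the first step: one must track the three reproducing kernel Hilbert spaces $\mathcal{C}$, $\mathcal{H}_0$ and $\mathcal{H}_+$ simultaneously, use the Stein property via Lemma \ref{mean ele lem} so that $\mu$ returns precisely the $\mathcal{C}$-component and the constants cancel, and verify that the $\mathcal{H}_0$- and $\mathcal{H}_+$-geometries coincide on $\mathcal{H}_0$; this is what collapses the Monte Carlo average over $\mathcal{D}_1$ into a deterministic inner product. Thereafter it is a standard reproducing-kernel quadrature argument ($D^{1/2}$ is the norm in $\mathcal{H}_+$ of the Riesz representer of the functional $g\mapsto\hat\mu(\mathcal{D}_0,\mathcal{D}_1;g)-\mu(g)$, equivalently of $\sum_i w_i k_+(\cdot,\bm x_i)-\bm 1$ in terms of the weights of Lemma \ref{explicit formulae}) followed by routine linear algebra, with invertibility of $\bm K_0$ and $\bm K_0+\bm 1\bm 1^T$ taken for granted exactly as in Lemma \ref{explicit formulae}.
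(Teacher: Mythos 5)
Your proof is correct and arrives at exactly the stated $D(\mathcal{D}_0,\mathcal{D}_1)$, but it executes the common Cauchy--Schwarz-in-RKHS strategy by a genuinely different route than the paper. The paper substitutes the explicit weight vector $\bm{w}$ of Lemma \ref{explicit formulae} (at $\lambda=0$), writes the error as $\langle \psi, \sum_i w_i k_0(\cdot,\bm{x}_i)\rangle_{\mathcal{H}_0}$ after the mean element vanishes via Lemma \ref{mean ele lem}, bounds $\|\psi\|_{\mathcal{H}_0}\leq\|f\|_{\mathcal{H}_+}$, and evaluates $\bm{w}^T\bm{K}\bm{w}$; you instead remain in $\mathcal{H}_+$, identify the $\lambda\searrow 0$ RLS estimate with the orthogonal projection $\Pi_V f$ onto $V=\mathrm{span}\{k_+(\cdot,\bm{x}_j)\}_{j\leq m}$ (which is indeed the same interpolant as in Lemma \ref{explicit formulae}, as a Sherman--Morrison calculation confirms), exploit $f-\Pi_V f\perp V$ to replace $\bar{k}$ by $\bar{k}-\Pi_V\bar{k}$, and compute $\|\bar{k}-\Pi_V\bar{k}\|_{\mathcal{H}_+}^2$ by Pythagoras. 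A point in favour of your version: it handles the constant ($\mathcal{C}$) component exactly. The paper's proof asserts $\bm{1}^T\bm{w}=1$, which does not hold exactly for the weights of Lemma \ref{explicit formulae} (the deficit is $\bm{1}^T\bm{K}_{1,0}\bm{K}_0^{-1}\bm{1}/[(n-m)(1+\bm{1}^T\bm{K}_0^{-1}\bm{1})]$), and a literal evaluation of $\bm{w}^T\bm{K}\bm{w}$ with the $k_0$-Gram matrix falls short of the stated $D$ by precisely the square of this deficit; your $\mathcal{H}_+$-level argument (equivalently, computing the $\mathcal{H}_+$-norm of the error representer $\sum_i w_i k_+(\cdot,\bm{x}_i)-1$, as you note parenthetically) absorbs that constant term automatically and reproduces the displayed formula for $D$ exactly. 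Your appeals to $\mu(\psi)=0$ for $\psi\in\mathcal{H}_0$ under (A2') and to invertibility of $\bm{K}_0$ and $\bm{K}_0+\bm{1}\bm{1}^T$ are at the same level of rigour as the paper's own treatment, so no gap there.
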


Theorem \ref{preasy} provides an explicit error bound for $f \in \mathcal{H}_+$, mimicking the approach of (R)QMC \citep[][Sec. 2.3.3]{Dick}.
This offers a principled approach to selection of the design points $\mathcal{D}_0$ since, writing $\mathbb{V}_{\mathcal{D}_0,\mathcal{D}_1}$ for the variance with respect to the joint distribution of $\mathcal{D}_0$ and $\mathcal{D}_1$, we have
\begin{eqnarray*}
\mathbb{V}_{\mathcal{D}_0,\mathcal{D}_1}[\hat{\mu}(\mathcal{D}_0,\mathcal{D}_1;f)] \leq \mathbb{E}_{\mathcal{D}_0}\mathbb{E}_{\mathcal{D}_1}[D(\mathcal{D}_0,\mathcal{D}_1)] \|f\|_{\mathcal{H}_+}^2.
\end{eqnarray*}
In the extreme case where $\bm{x} \neq \bm{x}' \implies k_0(\bm{x},\bm{x}') = 0$, the discrepancy $D(\mathcal{D}_0,\mathcal{D}_1)$ reduces to $(n-m)^{-1}$ and we recover the usual root-$n$ rate.
A similar bound forms the basis for recent work on two-sample testing by \cite{Chwialkowski,Liu}.

\subsubsection{Implementation}

Several randomly chosen splits of the samples $\mathcal{D}$ into subsets $\mathcal{D}_0$ and $\mathcal{D}_1$ may be averaged over to reduce estimator variance.
We note that a multi-splitting estimator remains unbiased.
As an alternative to multi-splitting, for applications where consistency suffices and unbiased estimation is not essential, we also propose the simplified estimator $(**)$ in Eqn. \ref{rewrite} with $m = n$.
Empirical results below show that bias is negligible for practical purposes and, to pre-empt our conclusions, we recommend this simplified estimator for use in applications due to its reduced variance compared to the multi-splitting estimator.
In all cases the regularisation parameter $\lambda$ was taken to be the smallest power of 10 such that the kernel matrix $\bm{K}_0 + \lambda\bm{I}$ has condition number lower than $10^{10}$.

A kernel $k(\bm{x},\bm{x}';\bm{\alpha})$ typically involves hyper-parameters $\bm{\alpha}$ that must be specified.
Selection of $\bm{\alpha}$ can proceed via cross-validation, under the assumption that $\mathcal{D}_0$ are independent samples from $\pi$.
Specifically, we randomly split the samples $\mathcal{D}_0$ into $m'$ training samples $\mathcal{D}_{0,0}$ and $m-m'$ test samples $\mathcal{D}_{0,1}$.
Then we propose to select $\bm{\alpha}$ to minimise $\|\bm{f}_{(0,1)} - \hat{\bm{f}}_{(0,1)}\|_2$ where $\bm{f}_{(0,1)}$ is a vector of values $f_i$ for $\bm{x}_i \in \mathcal{D}_{0,1}$, and $\hat{\bm{f}}_{(0,1)}$ are the corresponding predicted values.
In this way we are targeting the EFAE that reflects the variance of the CF estimator.
We emphasise that the cross-validated estimator does not require additional sampling and will remain unbiased provided that preliminary cross-validation is performed only using $\mathcal{D}_0$.
In this paper we employed the kernel defined in Remark \ref{sup condition}, with hyper-parameters $\alpha_1,\alpha_2$.
Full pseudocode is provided in the supplement.

\subsection{Illustration} \label{ill sec}

To illustrate the method we begin with simple, tractable examples.
Consider the synthetic problem of estimating the expectation of $f(\bm{X}) = \sin(\frac{\pi}{d}\sum_{i=1}^d X_i)$ where $\bm{X}$ is a $d$-dimensional standard Gaussian random variable.
By symmetry the true expectation is $\mu(f) = 0$.
Initially we take $n=50$ IID samples and consider the scalar case $d=1$.
Cross-validation was used to select tuning parameters.
Specifically:
(i) We selected the hyper-parameters $\alpha_1 = 0.1$, $\alpha_2=1$ on the basis that this approximately minimised the cross-validation error (Fig. S1a).
(ii) We found that estimator variance due to sample-splitting was minimised when at least half of the samples were allocated to $\mathcal{D}_0$ (Fig. S1b).
We therefore set a conservative default $m = \lceil n/2 \rceil$.
(iii) Empirical results showed that little additional variance reduction occurs from employing multiple splits (Fig. S1c), so we chose to just use a single split.
(iv) Finally, we found that the bias of the simplified estimator was negligible ($<\sim 10^{-3}$) compared to Monte Carlo error ($\sim 10^{-2}$) (Fig. S1d).
This is in line with an analogous result for classical control variates, where estimator bias vanishes asymptotically with respect to Monte Carlo error \citep[][p.200]{Glasserman}.

\begin{figure}[t!]
\centering
\includegraphics[width = \textwidth]{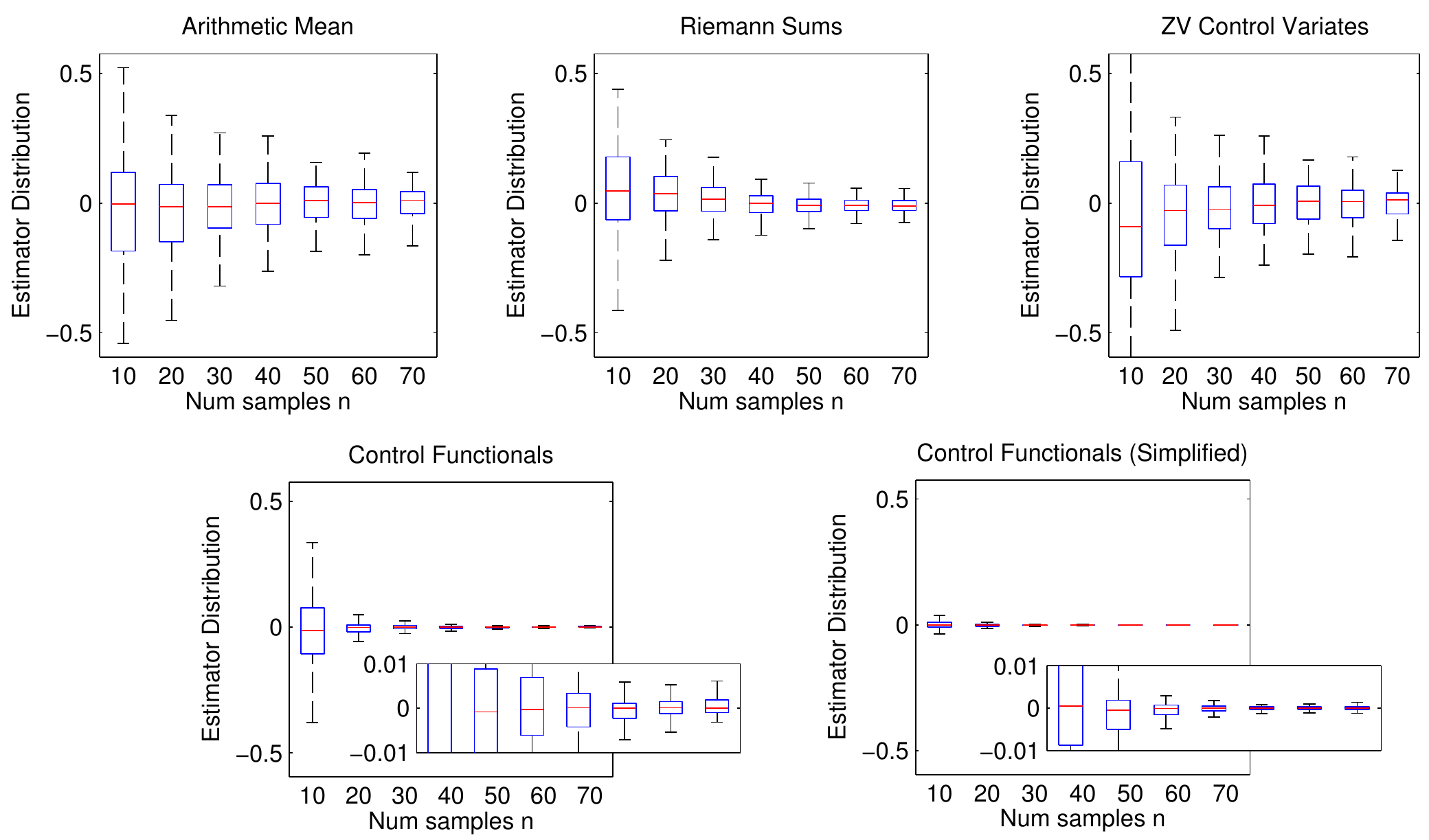}
\caption{Illustration on a synthetic integration problem in $d=1$ dimension.
Here we display the empirical sampling distribution of Monte Carlo estimators, based on $n$ samples and 100 independent realisations.
[The settings for all methods were as described in the main text.]}
\label{box1}
\end{figure}

In Fig. \ref{box1} we summarise the sampling distribution of both the sample-splitting and simplified CF estimators as the number of samples $n$ is varied.
The alternative approaches of the arithmetic mean, Riemann sums and ``zero variance'' (ZV) control variates are also shown, the latter being based on quadratic polynomials \citep{Mira}.
It is visually apparent that CFs enjoy the lowest variance at all samples sizes considered.
We note that in this synthetic example, where there are essentially no computational restrictions, the CF framework is unnecessary and gains in precision come with comparable increases in computational cost.
However we emphasise that, in the serious applications that follow, the CF calculations requires negligible computational resources in comparison to simulation from the model.
Super-root-$n$ convergence could perhaps be achieved by employing polynomials of increasing degree in the ZV method, but our implementation of this approach did not provide stable estimates in this example (full details in the supplement).

\begin{figure}[t!]
\centering
\includegraphics[width = 0.9\textwidth]{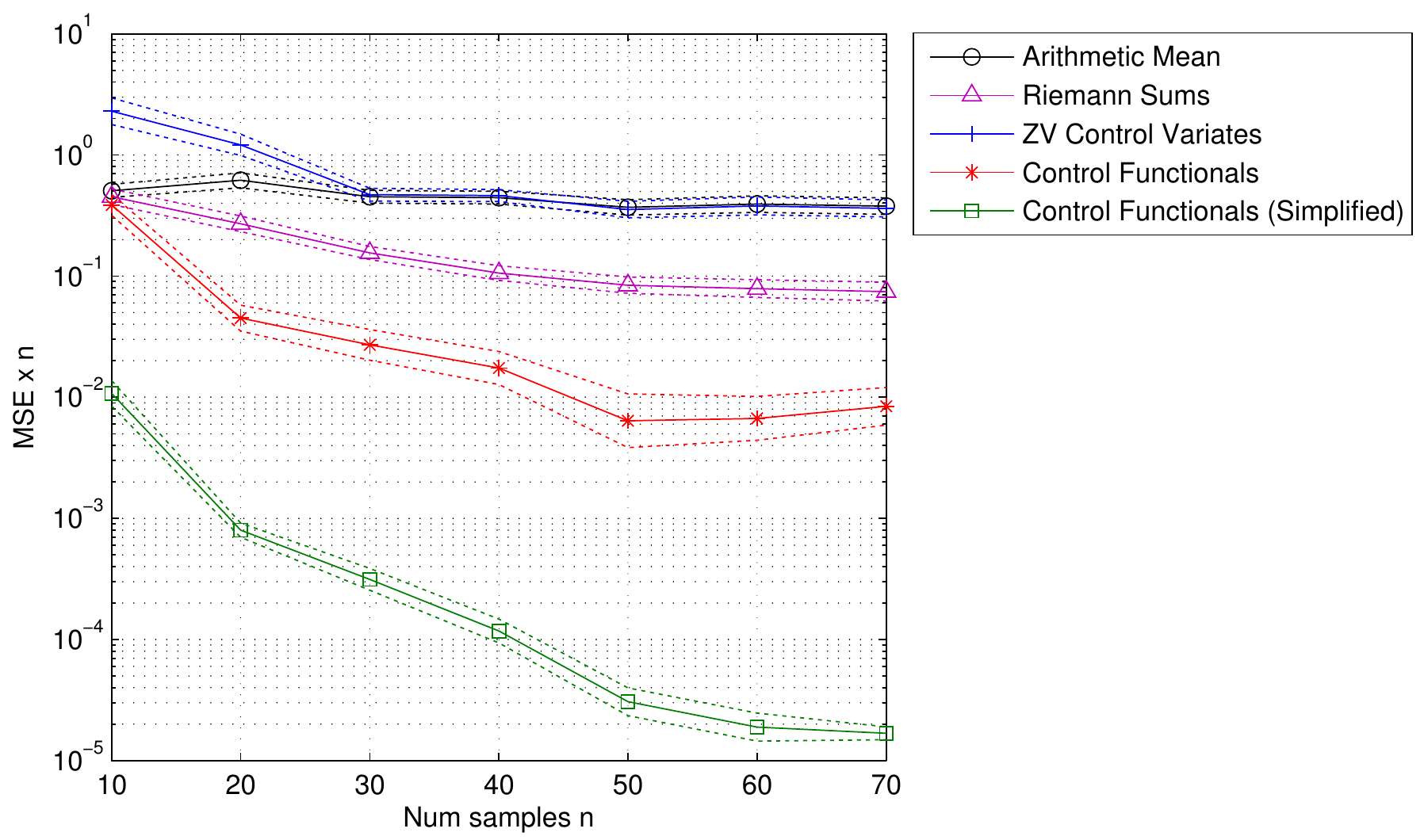}
\caption{Illustration on a synthetic integration problem in $d=1$ dimension (continued).
Empirical assessment of asymptotic properties.
\label{line1}
}
\end{figure}

Since the performance of CF is so pronounced, in order to more clearly visualise the results for all sample sizes, in Fig. \ref{line1} we plot the estimator mean square error (MSE) scaled by $n$, so that root-$n$ convergence corresponds to a horizontal line.
Empirical results here are consistent with theory, showing that the arithmetic mean and control variates all achieve a constant factor variance reduction, whereas Riemann sums and CFs achieve super-root-$n$ convergence.
In this example CFs significantly outperformed Riemann sums, the latter being based on piecewise linear approximations.
We plot results for both the sample-splitting CF estimator and the simplified CF estimator, observing that the latter has lower variance.

To assess the generality of our conclusions we considered going beyond the scalar case to examples with dimensions $d=3$ and $d=5$.
The analogous results in Figs. S2a, S2b show that, whilst increasing dimensionality presents fundamental challenges for all the variance reduction methods, CF continues to out-perform alternatives.
Going further we considered a variety of alternative problems, varying both the test function $f$ and the density $\pi$.
These include several pathological cases, with results summarised in Table S1.
The results marked (b) echo the conclusions of \cite{Mira}, that ZV control variates are effective in many cases where $f$ is well-approximated by a low-degree polynomial and $\pi$ is a Gaussian or gamma density.
However, when $f$ is not well-approximated by a low-degree polynomial, or when $\pi$ takes a more complex form, as in cases marked (c), ZV control variates can be outperformed by CFs, which have the potential to decrease variance dramatically.
We then investigated how CFs can fail when theoretical assumptions are violated (see examples marked ``CF $\times$'').
As expected, violation of (A2) and (A5) in (e), (g) respectively led to poor performance of the CF estimator.
Interestingly, violation of differentiability in example (f) did not lead to poor estimation, though this may be because $\pi$ was only non-differentiable at a single point.

We have not reported computational times for these experiments.
Our work is motivated by settings in which either simulation from $\pi$ or evaluation of $f$ (or both) are computationally prohibitive, so that additional effort required to implement CFs is negligible by comparison; we illustrate this with two realistic applications the next section.

\section{Applications}

Two applications are considered that together present many of the challenges associated with complex models.
Firstly we consider marginalisation of hyper-parameters in hierarchical models, focussing on a non-parametric prediction problem.
Here evaluation of $f$ forms a computational bottleneck due to the required inversion of a large matrix.
For this problem, CFs are shown to offer significant computational savings.
Secondly we consider computation of normalising constants for models based on non-linear ordinary differential equations (ODEs).
Evaluation of the likelihood function requires numerical integration of a system of ODEs and dominates computational expenditure in both sampling from $\pi$ and evaluation of $f$.
Here CFs combine with gradient-based population MCMC and thermodynamic integration in order to deliver a state-of-the-art technique for low-variance estimation of normalising constants.

\subsection{Marginalisation in hierarchical models}

A fully Bayesian treatment of hierarchical models aims to marginalise over hyper-parameters, but this often entails a prohibitive level of computation.
Here we explore the efficacy of CFs in such situations.

\subsubsection{A hierarchical GP model} \label{hierarchical GP}

The marginalisation of hyper-parameters is a common problem in spatial statistics and Bayesian statistics in general \citep{Besag,Agapiou,Filippone}.
Here we consider one such model that is based on $p$-dimensional GP regression.
Denote by $Y_i \in \mathbb{R}$ a measured response variable at state $\bm{z}_i \in \mathbb{R}^p$, assumed to satisfy $Y_i = g(\bm{z}_i) + \epsilon_i$ where $\epsilon_i \sim N(0,\sigma^2)$ are independent for $i = 1,\dots,N$ and $\sigma > 0$ will be assumed known.
In order to use training data $(y_i,\bm{z}_i)_{i=1}^n$ to make predictions regarding an unseen test point $\bm{z}_*$, we place a GP prior $g \sim \mathcal{GP}(0,c(\bm{z},\bm{z}';\bm{\theta}))$ where $c(\bm{z},\bm{z}';\bm{\theta}) = \theta_1 \exp (- \frac{1}{2\theta_2^2} \|\bm{z} - \bm{z}'\|_2^2 )$.
Here $\bm{\theta} = (\theta_1,\theta_2)$ are hyper-parameters that control how training samples are used to predict the response at a new test point.
In the fully-Bayesian framework these are assigned hyper-priors, say $\theta_1 \sim \Gamma(\alpha,\beta)$, $\theta_2 \sim \Gamma(\gamma,\delta)$ in the shape/scale parametrisation, which we write jointly as $\pi(\bm{\theta})$.

\subsubsection{Marginalising the GP hyper-parameters} \label{HGP}

We are interested in predicting the value of the response $Y_*$ corresponding to an unseen state vector $\bm{z}_*$.
Our estimator will be the Bayesian posterior mean given by
\begin{eqnarray}
\hat{Y}_* := \mathbb{E}[Y_*|\bm{y}] = \int \mathbb{E}[Y_*|\bm{y},\bm{\theta}] \pi(\bm{\theta}) \mathrm{d}\bm{\theta}, \label{robot target}
\end{eqnarray}
where we implicitly condition on the covariates $\bm{z}_1,\dots,\bm{z}_N,\bm{z}_*$.
Eqn. \ref{robot target} is unavailable in closed form and we therefore naive a Monte Carlo estimate by sampling $\bm{\theta}_1,\dots,\bm{\theta}_n$ independently from the prior $\pi(\bm{\theta})$ (more efficient QMC estimates are considered later). 
Phrasing in terms of our previous notation, the function of interest is 
\begin{eqnarray*}
f(\bm{\theta}) = \mathbb{E}[Y_*|\bm{y},\bm{\theta}] = \bm{C}_{*,N} (\bm{C}_N + \sigma^2 \bm{I}_{N \times N})^{-1} \bm{y}
\end{eqnarray*}
where $(\bm{C}_N)_{i,j} = c(\bm{z}_i,\bm{z}_j;\bm{\theta})$ and $(\bm{C}_{*,N})_{1,j} = c(\bm{z}_*,\bm{z}_j;\bm{\theta})$ and the underlying distribution is $\pi(\bm{\theta})$.
Each evaluation of the integrand $f(\bm{\theta})$ requires $O(N^3)$ operations due to the matrix inversion; this can be reduced by employing a ``subset of regressors'' approximation
\begin{eqnarray}
f(\bm{\theta}) \approx \bm{C}_{*,N'} (\bm{C}_{N',N} \bm{C}_{N,N'} + \sigma^2 \bm{C}_{N'})^{-1} \bm{C}_{N',N} \bm{y} \label{sor approx}
\end{eqnarray}
where $N' < N$ denotes a subset of the full data \citep[see Sec. 8.3.1 of][for full details]{Rasmussen2}.
To facilitate the illustration below, which investigates the sampling distribution of estimators, we take a random subset of $N=1,000$ training points and a subset of regressors approximation with $N' = 100$.
However we emphasise that evaluation of Eqn. \ref{sor approx} will typically be based on much larger $N$ and $N'$ and will be extremely expensive in general.
In applications we would therefore have to proceed with Monte Carlo estimation based on only a small number $n$ of these function evaluations.

\subsubsection{SARCOS robot arm} \label{sarcos results}

We used the hierarchical GP model in Sec. \ref{HGP} to estimate the inverse dynamics of a seven degrees-of-freedom SARCOS anthropomorphic robot arm. 
The task, as described in \citet[][Sec. 8.3.1]{Rasmussen2}, is to map from a 21-dimensional input space (7 positions, 7 velocities, 7 accelerations) to the corresponding 7 joint torques using the hierarchical GP model described in Sec. \ref{hierarchical GP}. 
Following \cite{Rasmussen2} we present results below on just one of the mappings, from the 21 input variables to the first of the seven torques.
The dataset consists of 48,933 input-output pairs, of which 44,484 were used as a training set and the remaining 4,449 were used as a test set. 
The inputs were translated and scaled to have mean zero and unit variance on the training set. The outputs were centred so as to have mean zero on the training set.
Here  $\sigma = 0.1$, $\alpha = \gamma = 25$, $\beta = \delta = 0.04$, so that each hyper-parameter $\theta_i$ has a prior mean of $1$ and a prior standard deviation of $0.2$.

For each test point $\bm{z}_*$ we estimated the sampling standard deviation of $\hat{Y}_*$ over 10 independent realisations of the Monte Carlo sampling procedure.
For CF we took default hyper-parameters $\alpha_1 = 0.1$, $\alpha_2 = 1$, the latter reflecting the fact that the training data were standardised.
The estimator standard deviations were estimated in this way for all 4,449 test samples and the full results are shown in Fig. \ref{robot}.
Note that each test sample corresponds to a different function $f$ and thus these results are quite objective, encompassing thousands of different Monte Carlo integration problems.
Results show that, for the vast majority of integration problems, CF achieves a lower estimator variance compared with both the arithmetic mean estimator and ZV control variates.
Here the cost of post-processing the Monte Carlo samples (using either ZV control variates or CF) is negligible in comparison to the cost of evaluating the function $f$, even once.
Indeed, CF requires that we invert a $n \times n$ matrix once, where $n$ is no larger than $N'$ in this example.

In the supplement we investigate an extension that draws design points $\mathcal{D}_0$ using RQMC.
Results show that CFs+RQMC outperforms RQMC alone.

\begin{figure}[t!]
\includegraphics[width = \textwidth,clip,trim = 0cm 0cm 0cm 0cm]{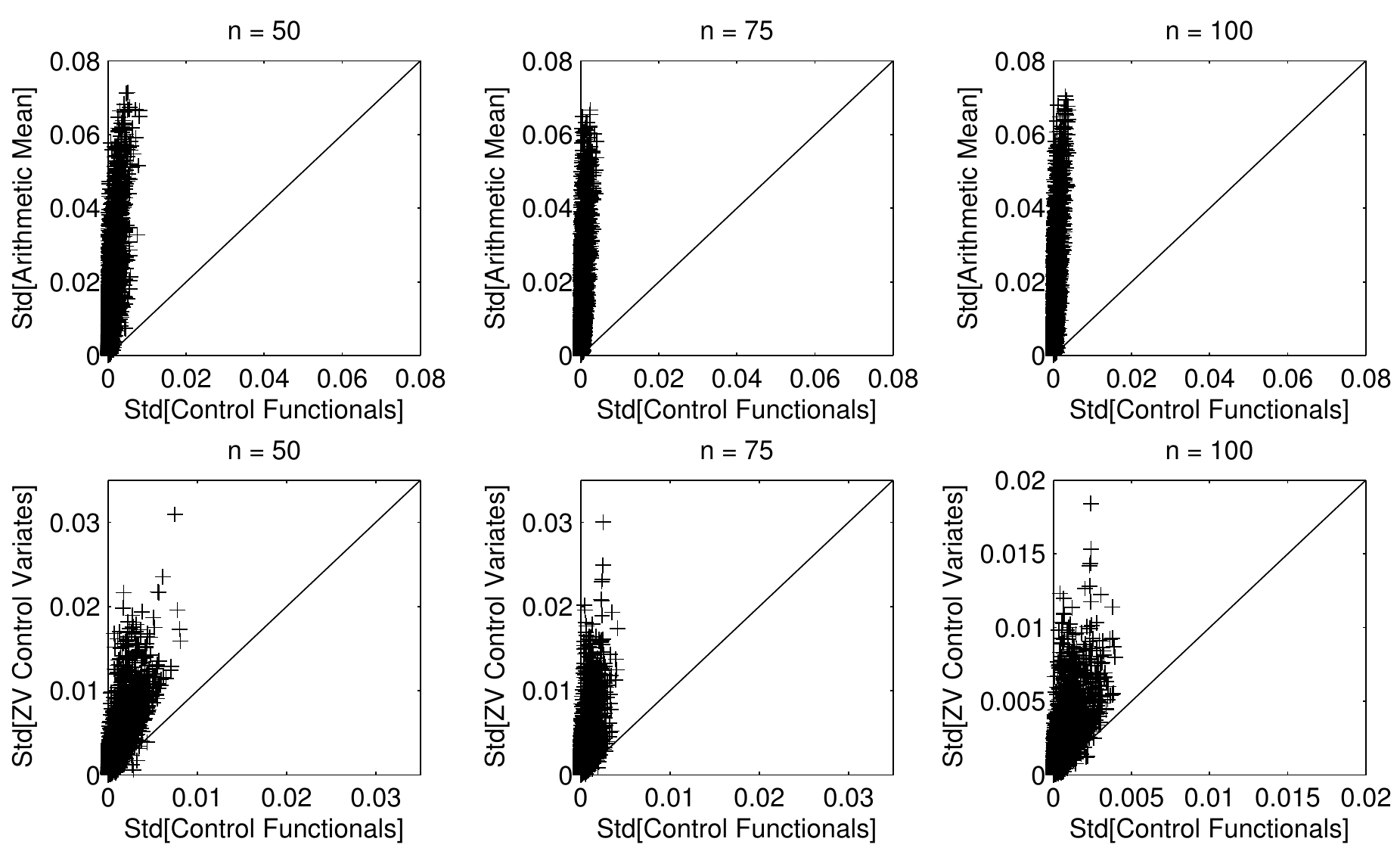}
\caption{Marginalisation of hyper-parameters in hierarchical models. [Here we display the sampling standard deviation of Monte Carlo estimators for the posterior predictive mean $\mathbb{E}[Y_*|\bm{y}]$ in the SARCOS robot arm example, computed over 10 independent realisations. Each point, representing one Monte Carlo integration problem, is represented by a cross.]}
\label{robot}
\end{figure}

\subsection{Normalising constants for non-linear ODE models}

Our second application concerns the estimation of normalising constants for non-linear ODE models \citep[e.g.][]{Calderhead}.
Recent empirical investigations recommend thermodynamic integration (TI) for this task \citep[e.g.][]{Friel3}.
The control variate method of \cite{Mira2} was recently applied to TI by \cite{Oates2}, who found that this ``controlled thermodynamic integral'' (CTI) was extremely effective for standard regression models, but only moderately effective in complex models including non-linear ODEs due to poor approximation by low-degree polynomials.
Below we study the application of CFs to TI in this setting where CTI is less effective.

\subsubsection{Thermodynamic integration}

Conditional on an inverse temperature parameter $t$, the ``power posterior'' for parameters $\bm{\theta}$ given data $\bm{y}$ is defined as $p(\boldsymbol{\theta} | \bm{y},t) \propto p(\bm{y}|\boldsymbol{\theta})^tp(\boldsymbol{\theta})$ \citep{Friel}.
Varying $t \in [0,1]$ produces a continuous path between the prior $p(\bm{\theta})$ and the posterior $p(\bm{\theta}|\bm{y})$ and it is assumed here that all intermediate distributions exist and are well-defined.
The standard thermodynamic identity is
\begin{eqnarray}
\log p(\bm{y}) =  \int_0^1 \mathbb{E}_{\bm{\theta}|\bm{y},t} [\log p(\bm{y}|\bm{\theta})] \mathrm{d}t,
\label{ML}
\end{eqnarray}
where the expectation in the integrand is with respect to the power posterior.
In TI, the one-dimensional integral in Eqn. \ref{ML} is evaluated numerically using a quadrature approximation over a discrete temperature ladder $0 = t_0 < t_1 < \dots < t_m = 1$.
Here we use the second-order quadrature recommended by \cite{Friel2}:
\begin{eqnarray*}
\log p(\bm{y}) \approx \sum_{i = 0}^{m-1} \frac{(t_{i+1}-t_i)}{2} (\hat{\mu}_i + \hat{\mu}_{i+1}) - \frac{(t_{i+1}-t_i)^2}{12} (\hat{\nu}_{i+1} - \hat{\nu}_i), 
\end{eqnarray*}
where $\hat{\mu}_i$, $\hat{\nu}_i$ are Monte Carlo estimates of the posterior mean and variance respectively of $\log p(\bm{y}|\bm{\theta})$ when $\bm{\theta}$ arises from $p(\bm{\theta}|\bm{y},t_i)$.
CTI uses ZV control variates to reduce the variance of these estimates.
However, in complex models $\log p(\bm{y}|\bm{\theta})$ will be poorly approximated by a low-degree polynomial and $p(\bm{\theta}|\bm{y},t)$ will be non-Gaussian; this explains the mediocre performance of CTI in these cases.
In contrast, CFs should still be able to deliver gains in estimation.

\subsubsection{Non-linear ODE models}

\begin{figure}[t!]
\centering
\includegraphics[clip,trim = 0cm 11cm 0cm 10.5cm,width = \textwidth]{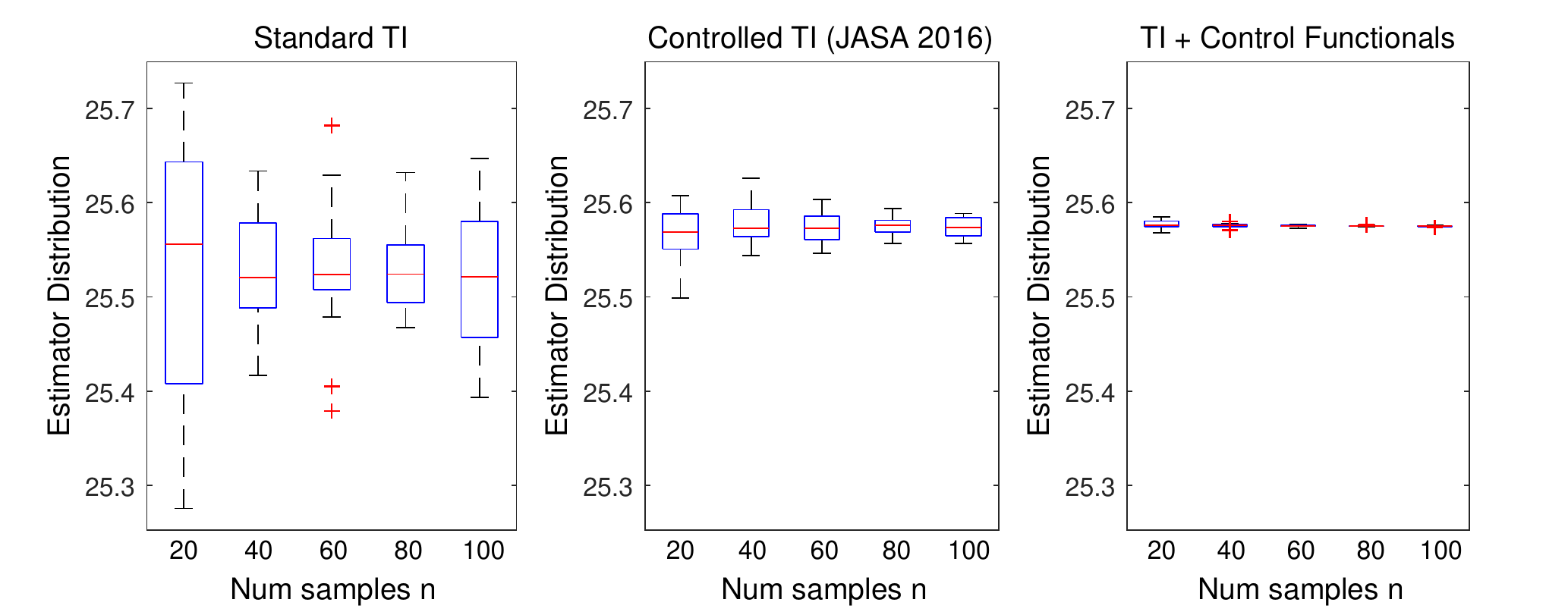}
\caption{Estimation of normalising constants for non-linear ordinary differential equations using thermodynamic integration (TI); van der Pol oscillator example.
[Here we show the distribution of 100 independent realisations of each estimator for $\log p(\bm{y})$.
``Standard TI'' is based on arithmetic means.
``Controlled TI'' is based on ZV control variates.]}
\label{ODE}
\end{figure}

The approach is illustrated by computing the marginal likelihood for a non-linear ODE model (the van der Pol oscillator), described in full in the supplement.
For TI, a temperature schedule $t_i = (i/30)^5$ was used, following the recommendation by \cite{Calderhead}.
The power posterior is not available in closed form, precluding the straight-forward generation of IID samples.
Instead, samples from each of the power posteriors $p(\bm{\theta}|\bm{y},t_i)$ were obtained using population MCMC, involving both (i) ``within-temperature'' proposals produced by the (simplified) m-MALA algorithm of \cite{Girolami}, and (ii) ``between-temperature'' proposals, as described previously by \cite{Calderhead}.
Gradient information is thus pre-computed in the sampling scheme and can be leveraged ``for free'', as noted by \cite{Papamarkou}.
We denote the number of samples by $n$, such that for each of the 31 temperatures we obtained $n$ samples (a total of $31 \times n$ occasions where the system of ODEs was integrated numerically).
Both sampling and evaluation of the integrand are computationally expensive, requiring the numerical solution of a system of ODEs.

Results in Fig. \ref{ODE} show that the CTI estimator improves upon the standard TI estimator, but a more substantial reduction in estimator variance results from using the CF method.
For the CF computation we have used the simplified but biased CF estimator, since TI in any case produces a biased estimate for the normalising constant due to numerical quadrature.
The hyper-parameters $\alpha_1 = 0.1$, $\alpha_2 = 3$ were selected on the basis of cross-validation.
The additional cost of using CF is essentially zero relative to running the population MCMC sampler, the latter requiring repeated solution of the ODE system.

\section{Discussion}

This paper developed a novel and general approach to integration that achieves super-root-$n$ convergence.
An important feature of CFs is that variance reduction is formulated as a {\it post-hoc} step.
This has several advantages:
(i) No modification is required to existing computer code associated with either the sampling process or the model itself.
(ii) Specific implementational choices, e.g. for the kernel, can be made {\it after} performing expensive simulations.
Through exploitation of recent results in functional analysis we were able to realise our general framework and construct estimators with an analytic form.
Empirical results evidenced the practical utility of CF estimators in settings where gradient information is available and the dimensionality of the problem is not too large (e.g. $\leq 10$).
The paper concludes below by suggesting directions for further research.

In terms of methodology:
(i) The estimates we presented here are not parameterisation-invariant. Likewise the specification of $f$ and $\pi$ is not unique, as we can employ an importance sampling transformation $f \mapsto (f \pi) / \pi'$, $\pi \mapsto \pi'$. It would therefore be interesting to elicit effective parametrisations as an additional {\it post-hoc} step.
(ii) The version of CFs presented here is limited in terms of the dimension of the problems for which it is effective.
Techniques for high-dimensional functional approximation should be applicable in the context of CFs \citep[e.g.][]{Dick2} and this forms part of our ongoing research.

In terms of theory:
(i) For bounded $\Omega$, sharper asymptotics are provided in a sequel, \cite{Oates7}.
These account for various levels of smoothness of both $f$ and $\pi$ and help to explain the strong empirical results presented here.
However the case of unbounded $\Omega$ seems considerably more challenging to characterise.
(ii) For problems involving un-normalised densities $\pi$, sampling is naturally facilitated by MCMC.
The analysis of CFs is carried out in \cite{Oates7} under a uniform ergodicity assumption.
For unbounded $\Omega$ this condition is too strong and future work will aim to relax this constraint.

In terms of application:
(i) Our methods were motivated by the un-normalised densities arising in Bayesian computation.
An extension should be possible to models with unknown, parameter-dependent normalising constants, which include e.g. Markov random fields \citep{Everitt} and random network models \citep{Friel4}.
(ii) An interesting direction would be to use the discrepancy $D(\mathcal{D}_0,\mathcal{D}_1)$ as a tool for assessment of MCMC convergence, providing a reproducing kernel Hilbert space alternative to \cite{Gorham}.

Finally we note that \cite{Oates3} provide a complementary study of CF strategies in the QMC setting.

\paragraph{Acknowledgements:}
The authors are grateful to the editor and referees, whose valuable feedback helped to improve the paper.
The authors benefited from discussions with Sergios Agapiou, Michel Caffarel, Adam Johansen, Christian Robert, Daniel Simpson and Tim Sullivan.
CJO was supported by EPSRC [EP/D002060/1] and the ARC Centre for Excellence in Mathematical and Statistical Frontiers. 
MG was supported by EPSRC [EP/J016934/1], EU [EU/259348], an EPSRC Established Career Fellowship and a Royal Society Wolfson Research Merit Award.
NC was supported by the ANR (Agence Nationale de la Recherche) grant Labex ECODEC ANR [11-LABEX-0047].

\appendix

\section{Proofs} \label{aproofs}

\begin{proof}[Proof of Proposition \ref{theo1}]
We exploit the unbiasedness property $\mathbb{E}_{\mathcal{D}_1}[\hat{\mu}(\mathcal{D}_0,\mathcal{D}_1;f) - \mu(f)] = 0$ to show that
\begin{eqnarray*}
\mathbb{E}_{\mathcal{D}_0} \mathbb{E}_{\mathcal{D}_1}[(\hat{\mu}(\mathcal{D}_0,\mathcal{D}_1;f) - \mu(f))^2] & = & \mathbb{E}_{\mathcal{D}_0} \mathbb{V}_{\mathcal{D}_1} [\hat{\mu}(\mathcal{D}_0,\mathcal{D}_1;f) ] \\
& = & (n-m)^{-1} \mathbb{E}_{\mathcal{D}_0} [\sigma^2(f - s_{f,\mathcal{D}_0})]
\end{eqnarray*}
where $(n-m)^{-1} = O(n^{-1})$ and by hypothesis $\mathbb{E}_{\mathcal{D}_0} [\sigma^2(f - s_{f,\mathcal{D}_0})] = O(m^{-\delta})$.
Thus using $m = O(n^{\gamma})$ produces an overall rate $O(n^{-1-\gamma\delta})$, as required.
\end{proof}

\begin{proof}[Proof of Proposition \ref{div prop}]
(A1) ensures $\psi$ is well-defined.
Since $\partial\Omega$ is piecewise smooth we can apply the divergence theorem \citep[e.g.][p.159]{Bourne} to obtain
\begin{eqnarray*}
\mu(\psi) = \int_{\Omega} \psi(\bm{x}) \pi(\bm{x}) \mathrm{d}\bm{x} = \int_{\Omega} \nabla_{\bm{x}} \cdot [\bm{\phi}(\bm{x}) \pi(\bm{x})] \mathrm{d}\bm{x} = \oint_{\partial\Omega} \pi(\bm{x}) \bm{\phi}(\bm{x}) \cdot \bm{n}(\bm{x}) S(\mathrm{d}\bm{x}), 
\end{eqnarray*}
which is zero by (A2). 
The use of this identity in statistical applications is often attributed to \cite{Stein}.
Thus $\mu(s_{f,\mathcal{D}_0}) = \mu(c) + \mu(\psi) = c$, as required.
\end{proof}

\begin{proof}[Proof of Theorem \ref{conjugate}]
{\it Stage 1:} We begin by defining the set of CFs $\mathcal{H}_0$.
Given a reproducing kernel $k : \Omega \times \Omega \rightarrow \mathbb{R}$ for the reproducing kernel Hilbert space (RKHS) $\mathcal{H}$, define the canonical feature map $\Phi : \Omega \rightarrow \mathcal{H}$ by $\Phi(\bm{x}) = k(\cdot,\bm{x})$.
Under (A1) the gradient function $\bm{u} : \Omega \rightarrow \mathbb{R}$ is well-defined.
Under (A3) $k$ has mixed first order partial derivatives; it follows that all elements $\phi_i \in \mathcal{H}$ are differentiable and thus $(\partial/\partial x_i) \phi_i(\bm{x})$ is well-defined \citep[][Cor. 4.36, p131]{Steinwart2}.
We then have that 
\begin{eqnarray}
\psi(\bm{x}) & = & \sum_{i=1}^d (\partial/\partial x_i) \phi_i(\bm{x}) + u_i(\bm{x}) \phi_i(\bm{x}) \nonumber \\
& = & \sum_{i=1}^d (\partial/\partial x_i) \langle \phi_i , \Phi(\bm{x}) \rangle_{\mathcal{H}} + u_i(\bm{x}) \langle \phi_i , \Phi(\bm{x}) \rangle_{\mathcal{H}} \nonumber \\
& = & \sum_{i=1}^d \langle \phi_i , (\partial/\partial x_i)\Phi(\bm{x}) + u_i(\bm{x}) \Phi(\bm{x}) \rangle_{\mathcal{H}} = \sum_{i=1}^d \langle \phi_i , \Phi_i^*(\bm{x}) \rangle_{\mathcal{H}}, \nonumber
\end{eqnarray}
where we have used the notation $\Phi_i^*(\bm{x}) = (\partial/\partial x_i) \Phi(\bm{x}) + u_i(\bm{x}) \Phi(\bm{x})$.
Write $\Phi^* : \Omega^d \rightarrow \mathbb{R}^d$ for the derived feature map with $i$th component $\Phi_i^*$.
Define the set of all CFs $\psi$ of this form as
$$\mathcal{H}_0 = \{\psi : \Omega \rightarrow \mathbb{R} \text{ such that } \forall\bm{x}\in\Omega, \; \psi(\bm{x}) = \langle \bm{\phi} , \Phi^*(\bm{x}) \rangle_{\mathcal{H}^d} \text{ for some } \bm{\phi} \in \mathcal{H}^d \}.$$
Clearly $\mathcal{H}_0$ is a vector space with addition and multiplication defined pointwise; $(\lambda\psi + \lambda'\psi')(\bm{x}) = \lambda\psi(\bm{x}) + \lambda'\psi'(\bm{x})$.

{\it Stage 2:}
We now show that $\mathcal{H}_0$ can be endowed with the structure of a RKHS.
To this end, define a norm on $\mathcal{H}_0$ by
$$
\|\psi\|_{\mathcal{H}_0} := \inf_{\bm{\phi} \in \mathcal{H}^d} \{\|\bm{\phi}\|_{\mathcal{H}_d} \text{ such that } \forall\bm{x}\in\Omega, \; \psi(\bm{x}) = \langle \bm{\phi} , \Phi^*(\bm{x}) \rangle_{\mathcal{H}^d} \}.
$$
Theorem 4.21 (p121) of \cite{Steinwart2} immediately gives that the normed space $(\mathcal{H}_0,\|\cdot\|_{\mathcal{H}_0})$ is a RKHS whose kernel $k_0$ satisfies $k_0(\bm{x},\bm{x}') = \langle \Phi^*(\bm{x}) , \Phi^*(\bm{x}') \rangle_{\mathcal{H}^d}$.
Thus we can directly calculate
\begin{eqnarray*}
k_0(\bm{x},\bm{x}') & = & \langle \Phi^*(\bm{x}) , \Phi^*(\bm{x}') \rangle_{\mathcal{H}^d} \\
& = & \sum_{i=1}^d \langle (\partial/\partial x_i) \Phi(\bm{x}) + u_i(\bm{x}) \Phi(\bm{x}) , (\partial/\partial x_i') \Phi(\bm{x}') + u_i(\bm{x}') \Phi(\bm{x}') \rangle_{\mathcal{H}} \\
& = & \sum_{i=1}^d \begin{array}{ll} (\partial/\partial x_i) (\partial/\partial x_i') k(\bm{x},\bm{x}') + u_i(\bm{x}) (\partial/\partial x_i') k(\bm{x},\bm{x}') \\ \; \; \; \; \; \; \; \; \; \; \; \; \; \; \; \; \; \; \; \;   + u_i(\bm{x}') (\partial/\partial x_i) k(\bm{x},\bm{x}') + u_i(\bm{x}) u_i(\bm{x}') k(\bm{x},\bm{x}'), \end{array}
\end{eqnarray*}
where the interchange of derivative and inner product is justified by (A3) and Lemma 4.34 (p130) in \cite{Steinwart2}.
This completes the proof.
\end{proof}

\begin{proof}[Proof of Lemma \ref{mean ele lem}]
(A1,3) ensure the kernel $k_0$ is well-defined. 
Then
\begin{eqnarray*}
\int_{\Omega} k_0(\bm{x},\bm{x}') \pi(\bm{x}') \mathrm{d}\bm{x}' & = & \int_{\Omega} [\nabla_{\bm{x}} \cdot \nabla_{\bm{x}'} k(\bm{x},\bm{x}')] \pi(\bm{x}') \mathrm{d}\bm{x}' + \int_{\Omega} [\bm{u}(\bm{x}) \cdot \nabla_{\bm{x}'} k(\bm{x},\bm{x}')] \pi(\bm{x}') \mathrm{d}\bm{x}' \\
& & + \int_{\Omega} [\bm{u}(\bm{x}') \cdot \nabla_{\bm{x}} k(\bm{x},\bm{x}')] \pi(\bm{x}') \mathrm{d}\bm{x}' + \int_{\Omega} [\bm{u}(\bm{x}) \cdot \bm{u}(\bm{x}') k(\bm{x},\bm{x}')] \pi(\bm{x}') \mathrm{d}\bm{x}' \\
& = & \int_{\Omega} [\nabla_{\bm{x}'} \cdot \nabla_{\bm{x}} k(\bm{x},\bm{x}')] \pi(\bm{x}') + [\nabla_{\bm{x}} k(\bm{x},\bm{x}')] \cdot [\nabla_{\bm{x}'} \pi(\bm{x}')] \mathrm{d}\bm{x}' \\
& & + \bm{u}(\bm{x}) \cdot \int_{\Omega} [\nabla_{\bm{x}'} k(\bm{x},\bm{x}')] \pi(\bm{x}') + k(\bm{x},\bm{x}') [\nabla_{\bm{x}'} \pi(\bm{x}')] \mathrm{d}\bm{x}' \\
& = & \int_{\Omega} \nabla_{\bm{x}'} \cdot \left\{ [\nabla_{\bm{x}} k(\bm{x},\bm{x}')] \pi(\bm{x}') \right\} \mathrm{d}\bm{x}' + \bm{u}(\bm{x}) \cdot \int_{\Omega} \nabla_{\bm{x}'} \left\{ k(\bm{x},\bm{x}') \pi(\bm{x}) \right\} \mathrm{d}\bm{x}'.
\end{eqnarray*}
Now using the divergence theorem \citep[][p.159]{Bourne} we obtain
\begin{eqnarray*}
= \underbrace{\oint_{\partial\Omega} \nabla_{\bm{x}} k(\bm{x},\bm{x}') \pi(\bm{x}') \cdot \bm{n}(\bm{x}')  S(\mathrm{d}\bm{x}')}_{=0 \; \pi\text{-a.e. from (A2')}} + \bm{u}(\bm{x}) \cdot \underbrace{\oint_{\partial\Omega} k(\bm{x},\bm{x}') \pi(\bm{x}') \bm{n}(\bm{x}')  S(\mathrm{d}\bm{x}')}_{=\bm{0} \; \pi\text{-a.e. from (A2')}},
\end{eqnarray*}
proving the claim.
\end{proof}

\begin{proof}[Proof of Lemma \ref{finite variance}]
From Theorem \ref{conjugate}, (A1,3) ensure $\mathcal{H}_0$ is well-defined.
Moreover, from Lemma \ref{mean ele lem}, (A1,2',3) imply that $\mu(\psi) = 0$ and thus
$$
\sigma^2(\psi) = \int_{\Omega} \psi(\bm{x})^2 \pi(\bm{x}) \mathrm{d}\bm{x}.
$$
Now, given $\psi \in \mathcal{H}_0$, we need to show $\sigma^2(\psi) < \infty$.
By the reproducing property followed by the Cauchy-Schwarz inequality, we have
\begin{eqnarray*}
|\psi(\bm{x})| = |\langle \psi, k_0(\cdot,\bm{x}) \rangle_{\mathcal{H}_0}| \leq \|\psi\|_{\mathcal{H}_0} \|k_0(\cdot,\bm{x})\|_{\mathcal{H}_0}.
\end{eqnarray*}
Using the reproducing property again, we have $\|k_0(\cdot,\bm{x})\|_{\mathcal{H}_0}^2 = k_0(\bm{x},\bm{x})$ and it follows from (A4) that 
\begin{eqnarray*}
\sigma^2(\psi) = \int \psi(\bm{x})^2 \pi(\bm{x}) \mathrm{d}\bm{x} \leq \int_{\Omega} \|\psi\|_{\mathcal{H}_0}^2 k_0(\bm{x},\bm{x}) \pi(\bm{x}) \mathrm{d}\bm{x} = \|\psi\|_{\mathcal{H}_0}^2 \int_{\Omega}  k_0(\bm{x},\bm{x}) \pi(\bm{x}) \mathrm{d}\bm{x} < \infty,
\end{eqnarray*}
as required.
\end{proof}

\begin{proof}[Proof of Theorem \ref{sun proof}]
From Theorem \ref{conjugate}, (A1,3) ensure $\mathcal{H}_+$ is well-defined.
Unbiasedness follows from (A1,2',3) and Lemma \ref{finite variance}.
Below we employ the standard notation
$$
L^2(\pi) = \mathcal{L}^2(\pi) \setminus \{f \text{ such that } f = 0 \; \pi\text{-almost everywhere}\}
$$
and denote the standard norm on this space by $\|\cdot\|_{L^2(\pi)}$.

For the remainder we appeal to the relatively recent work of \cite{Sun1}, who considered convergence in a general setting where (i) $\Omega \cup \partial\Omega$ is not required to be compact in $\mathbb{R}^d$, and (ii) only weak assumptions are required on the kernel $k_+$, which can be easily satisfied in our setting.
To this end, define the integral operator 
$$
(T g)(\bm{x}) := \int_{\Omega}  k_+(\bm{x},\bm{x}') g(\bm{x}') \pi(\bm{x}') \mathrm{d}\bm{x}', \; \; \; \; \; \bm{x} \in \Omega, \; g \in L^2(\pi).
$$
In the well-posed setting of (A5), Theorem 1.1 of \cite{Sun1} establishes that if (i) $\sup_{\bm{x} \in \Omega} k_+(\bm{x},\bm{x}) < \infty$ and (ii) $T^{-1/2} f \in L^2(\pi)$, then with a RLS estimator based on $\lambda = O(m^{-1/2})$ we have $\mathbb{E}_{\mathcal{D}_0} [ \sigma^2(f - s_{f,\mathcal{D}_0})] = O(m^{-1/6})$.
Inserting this rate into Proposition \ref{theo1} with $\gamma = 1$, $\delta = 1/6$ would produce a MSE $\mathbb{E}_{\mathcal{D}_0} \mathbb{E}_{\mathcal{D}_1}[(\hat{\mu}(\mathcal{D}_0,\mathcal{D}_1;f) - \mu(f))^2] = O(n^{-1-\gamma\delta}) = O(n^{-7/6})$.

It therefore remains to prove requirements (i) and (ii) above are satisfied.
For (i) we have that $\sup_{\bm{x} \in \Omega} k_+(\bm{x},\bm{x}') = 1 + \sup_{\bm{x} \in \Omega} k_0(\bm{x},\bm{x})$, where the second term is finite by (A4').
For (ii), Prop. 3.3 of \cite{Sun1} (which does not depend on Theorem 1.1 of the same paper) shows that, when (i) holds, we have $\|T^{-1/2} h\|_{L^2(\pi)} = \|h\|_{\mathcal{H}_+}$ for all $h \in \mathcal{H}_+$.
Since $f \in \mathcal{H}_+$ by (A5) we thus have $\|T^{-1/2} f\|_{L^2(\pi)} = \|f\|_{\mathcal{H}_+} < \infty$ and so $T^{-1/2}f \in L^2(\pi)$, as required.
\end{proof}

\begin{proof}[Proof of Lemma \ref{explicit formulae}]
From Theorem \ref{conjugate}, (A1,3) ensure $\mathcal{H}_0$ is well-defined.
The interpolation problem is equivalently expressed as $s_{f,\mathcal{D}_0} = \hat{c} + \hat{\psi}$ where
\begin{eqnarray*}
(\hat{c},\hat{\psi}) := \underset{c \in \mathcal{C}, \; \psi \in \mathcal{H}_0}{\arg\min} \|c\|_{\mathcal{C}}^2 + \|\psi\|_{\mathcal{H}_0}^2 \text{ s.t. } f(\bm{x}_j) = c + \psi(\bm{x}_j) \text{ for } j = 1,...,m.
\end{eqnarray*}
For fixed $c \in \mathcal{C}$, the representer theorem \citep[][Thm. 5.5, p168]{Steinwart2} tells us that the solution
\begin{eqnarray*}
\hat{\psi} = \underset{\psi \in \mathcal{H}_0}{\arg\min} \|\psi\|_{\mathcal{H}_0} \text{ s.t. } f(\bm{x}_j) = c + \psi(\bm{x}_j) \text{ for } j = 1,...,m 
\end{eqnarray*}
takes the form $\psi(\bm{x}) = \sum_{i=1}^m \beta_i k_0(\bm{x}_i,\bm{x})$ where, due to the reproducing property, $\|\psi\|_{\mathcal{H}_0}^2 = \bm{\beta}^T\bm{K}_0\bm{\beta}$.
Thus writing $\bm{\beta} = [\beta_1,\dots,\beta_m]^T$ reduces the problem to
\begin{eqnarray*}
(\hat{c},\hat{\bm{\beta}}) = \underset{c \in \mathcal{C} , \bm{\beta} \in \mathbb{R}^m}{\arg\min} c^2 \text{ s.t. } f(\bm{x}_j) = c + \sum_{i=1}^m \beta_i k_0(\bm{x}_i,\bm{x}_j) \text{ for } j = 1,...,m
\end{eqnarray*}
Differentiating with respect to $c$ and $\bm{\beta}$ leads, via the Woodbury matrix inversion identity, to the solution
\begin{eqnarray*}
\hat{c} = \frac{\bm{1}^T\bm{K}_0^{-1}\bm{f}_0}{1 + \bm{1}^T\bm{K}_0^{-1}\bm{1}}, \; \; \; \; \; \hat{\bm{\beta}} = \bm{K}_0^{-1} (\bm{f}_0 - \hat{c}\bm{1})
\end{eqnarray*}
and associated fitted values $\hat{\bm{f}}_1 = \hat{c}\bm{1} + \bm{K}_{1,0}\hat{\bm{\beta}}$ at the points $\mathcal{D}_1$.
Putting this together, we have
\begin{eqnarray*}
\hat{\mu}(\mathcal{D}_0,\mathcal{D}_1;f) \; = \; \frac{1}{n-m} \sum_{i=m+1}^n f_{\mathcal{D}_0}(\bm{x}_i) & = & \frac{1}{n-m} \sum_{i=m+1}^n f(\bm{x}_i) - s_{f,\mathcal{D}_0}(\bm{x}_i) + \mu(s_{f,\mathcal{D}_0}) \\
& = & \frac{1}{n-m} \bm{1}^T (\bm{f}_1 - \hat{\bm{f}}_1) + \hat{c}.
\end{eqnarray*}
This completes the proof.
\end{proof}

\begin{proof}[Proof of Theorem \ref{preasy}]
From Theorem \ref{conjugate}, (A1,3) ensure $\mathcal{H}_0$ is well-defined.
The CF estimator takes the form
$$
\hat{\mu}(\mathcal{D}_0,\mathcal{D}_1;f) = \sum_{i=1}^n w_i f(\bm{x}_i) = \hat{c} + \sum_{i=1}^n w_i \psi(\bm{x}_i)
$$
where, by Lemma \ref{explicit formulae}, the vector of weights $\bm{w} = [w_1,\dots,w_n]^T$ is given by
\begin{eqnarray}
\bm{w} = \left[ \begin{array}{c} - \frac{(\bm{K}_0 + \lambda m \bm{I})^{-1} \bm{K}_{0,1} \bm{1}}{n-m} + \frac{1}{n-m} \frac{(\bm{1}^T(\bm{K}_0  + \lambda m \bm{I})^{-1}\bm{K}_{0,1}\bm{1})  (\bm{K}_0  + \lambda m \bm{I})^{-1} \bm{1}}{1 + \bm{1}^T(\bm{K}_0 + \lambda m \bm{I})^{-1}\bm{1}} \\ \frac{1}{n-m} \bm{1} \end{array} \right] \label{weights vector}
\end{eqnarray}
and satisfies $\bm{1}^T\bm{w} = 1$.
Using the reproducing property, the estimation error is
\begin{eqnarray*}
\hat{\mu}(\mathcal{D}_0,\mathcal{D}_1;f) - \mu(f) & = & \sum_{i=1}^n w_i f(\bm{x}_i) - \int_{\Omega} f(\bm{x}) \pi(\bm{x}) \mathrm{d}\bm{x} \\
& = & \sum_{i=1}^n w_i \psi(\bm{x}_i) - \int_{\Omega} \psi(\bm{x}) \pi(\bm{x}) \mathrm{d}\bm{x} \\
& = & \Big\langle \psi , \sum_{i=1}^n w_i k_0(\cdot,\bm{x}_i) - \underbrace{\int_{\Omega} k_0(\cdot,\bm{x}) \pi(\bm{x}) \mathrm{d}\bm{x}}_{=0\text{ from Lemma \ref{mean ele lem}}} \Big\rangle_{\mathcal{H}_0}. \label{techn} 
\end{eqnarray*}
It follows from the Cauchy-Schwarz inequality that
$$
|\hat{\mu}(\mathcal{D}_0,\mathcal{D}_1;f) - \mu(f)| \leq \|\psi\|_{\mathcal{H}_0} \left\| \sum_{i=1}^n w_i k_0(\cdot,\bm{x}_i) \right\|_{\mathcal{H}_0}.
$$
The first term satisfies $\|\psi\|_{\mathcal{H}_0}^2 \leq \hat{c}^2 + \|\psi\|_{\mathcal{H}_0}^2 = \|f\|_{\mathcal{H}_+}^2$ and, from the reproducing property, the second term satisfies
\begin{eqnarray}
\left\| \sum_{i=1}^n w_i k_0(\cdot,\bm{x}_i) \right\|_{\mathcal{H}_0}^2 = \bm{w}^T \bm{K} \bm{w}, \; \; \; \; \; \bm{K} = \left[ \begin{array}{cc} \bm{K}_0 & \bm{K}_{0,1} \\ \bm{K}_{1,0} & \bm{K}_1 \end{array} \right]. \label{final form}
\end{eqnarray}
Finally, upon substituting Eqn. \ref{weights vector} into Eqn. \ref{final form} we obtain the required result with $D(\mathcal{D}_0,\mathcal{D}_1) = \bm{w}^T\bm{K}\bm{w}$.
The special case $\lambda = 0$ is reported in the statement of the theorem.
\end{proof}


\begin{thebibliography}{99}

\bibitem[Agapiou {\it et al.}, 2014]{Agapiou} Agapiou, S., Bardsley, J. M., Papaspiliopoulos, O. and Stuart, A. M. (2014) Analysis of the Gibbs sampler for hierarchical inverse problems. {\it SIAM/ASA J. Uncertainty Quantification}, {\bf 2}(1), 511-544.

\bibitem[Andrad\'{o}ttir {\it et al.}, 1993]{Andradottir} Andrad\'{o}ttir, S., Heyman, D. P. and Ott, T. J. (1993) Variance reduction through smoothing and control variates for Markov Chain simulations. {\it ACM T. M. Comput. S.}, {\bf 3}, 167-189.

\bibitem[Angelikopoulos {\it et al.}, 2012]{Angelikopoulos} Angelikopoulos, P., Papadimitriou, C. and Koumoutsakos, P. (2012) Bayesian uncertainty quantification and propagation in molecular dynamics simulations: A high performance computing framework. {\it J. Chem. Phys.}, {\bf 137}, 144103.

\bibitem[Assaraf and Caffarel, 1999]{Assaraf} Assaraf, R. and Caffarel, M. (1999) Zero-Variance Principle for Monte Carlo Algorithms. {\it Phys. Rev. Lett.}, {\bf 83}(23), 4682-4685.

\bibitem[Assaraf and Caffarel, 2003]{Assaraf2} Assaraf, R. and Caffarel, M. (2003) Zero-Variance Zero-Bias Principle for Observables in quantum Monte Carlo: Application to Forces. {\it J. Chem. Phys.}, {\bf 119}, 10536.

\bibitem[Ba and Joseph, 2012]{Ba} Ba, S. and Joseph, V. R. (2012) Composite Gaussian process models for emulating expensive functions. {\it Ann. Appl. Stat.}, {\bf 6}(4), 1838-1860.

\bibitem[Bakhvalov, 1959]{Bakhvalov} Bakhvalov, N.S. (1959) On the approximate calculation of multiple integrals (in Russian). {\it Vestnik MGU, Ser. Math. Mech. Astron. Phys. Chem.}, {\bf 4}, 3-18.

\bibitem[Berlinet and Thomas-Agnan, 2004]{Berlinet} Berlinet, A. and Thomas-Agnan, C. (2004) {\it Reproducing kernel Hilbert spaces in probability and statistics}. Kluwer Academic, Boston.

\bibitem[Besag and Green, 1993]{Besag} Besag, J. and Green, P. J. (1993) Spatial statistics and Bayesian computation. {\it J. R. Statist. Soc. B}, {\bf 55}, 25-37.

\bibitem[Bourne and Kendall, 1977]{Bourne} Bourne, D. E. and Kendall, P. C. (1977) {\it Vector analysis and Cartesian tensors} (2nd ed.). Nelson and Sons, UK.

\bibitem[Briol {\it et al.}, 2015]{Briol} Briol, F.X., Oates, C.J., Girolami, M., Osborne, M.A. (2016) Frank-Wolfe Bayesian Quadrature: Probabilistic Integration with Theoretical Guarantees. {\it Adv. Neur. In.}, {\bf 28}.

\bibitem[Briol {\it et al.}, 2016]{Briol2} Briol, F.X., Oates, C.J., Girolami, M., Osborne, M.A., Sejdinovic, D. (2016) Probabilistic Integration: A Role for Statisticians in Numerical Analysis? arXiv:1512.00933.

\bibitem[Calderhead and Girolami, 2009]{Calderhead} Calderhead, B. and Girolami, M. (2009) Estimating Bayes factors via thermodynamic integration and population MCMC. {\it Comput. Stat. Data An.}, {\bf 53}, 4028-4045.

\bibitem[Chwialkowski {\it et al.}, 2016]{Chwialkowski} Chwialkowski, K., Strathmann, H. and Gretton, A. (2016) A Kernel Test of Goodness of Fit. arXiv:1602.02964.

\bibitem[Cornuet {\it et al.}, 2012]{Cornuet} Cornuet, J.-M., Marin, J.-M., Mira, A. and Robert, C. P. (2012) Adaptive Multiple Importance Sampling. {\it Scand. J. Stat.}, {\bf 39}, 798-812.

\bibitem[Dellaportas and Kontoyiannis, 2012]{Dellaportas} Dellaportas, P. and Kontoyiannis, I. (2012) Control variates for estimation based on reversible Markov chain Monte Carlo samplers. {\it J. R. Statist. Soc. B}, {\bf 74}, 133-161.

\bibitem[Dick and Pillichshammer, 2010]{Dick} Dick, J. and Pillichshammer, F. (2010) {\it Discrepancy theory and quasi-Monte Carlo integration.} Springer,
Berlin.

\bibitem[Dick {\it et al.}, 2013]{Dick2} Dick, J., Kuo, F. Y. and Sloan, I. H. (2013) High-dimensional integration: the quasi-Monte Carlo way. {\it Acta Numer.}, {\bf 22}, 133-288.

\bibitem[Douc and Robert, 2011]{Douc} Douc, R. and Robert, C. P. (2011) A vanilla Rao-Blackwellization of Metropolis-Hastings algorithms. {\it Ann. Stat.}, {\bf 39}(1), 261-277.

\bibitem[Everitt, 2012]{Everitt} Everitt, R. G. (2012) Bayesian parameter estimation for latent Markov random fields and social networks. {\it J. Comp. Graph. Stat.}, {\bf 21}(4), 940-960.

\bibitem[Filippone and Girolami, 2014]{Filippone} Filippone, M. and Girolami, M. (2014) Pseudo-marginal Bayesian inference for Gaussian processes. {\it IEEE T. Pattern Anal.}, {\bf 36}(11), 2214-2226.

\bibitem[Friel and Pettitt, 2008]{Friel} Friel, N. and Pettitt, A. N. (2008) Marginal likelihood estimation via power posteriors. {\it J. R. Statist. Soc. B}, {\bf 70}, 589-607.

\bibitem[Friel and Wyse, 2012]{Friel3} Friel, N. and Wyse, J. (2012) Estimating the statistical evidence - a review. {\it Stat. Neerl.}, {\bf 66}, 288-308.
 
\bibitem[Friel {\it et al.}, 2014]{Friel2} Friel, N., Hurn, M. A. and Wyse, J. (2014) Improving power posterior estimation of statistical evidence. {\it Stat. Comp.}, {\bf 24}, 709-723.

\bibitem[Friel {\it et al.}, 2015]{Friel4} Friel, N., Mira, A. and Oates, C. J. (2015) Exploiting Multi-Core Architectures for Reduced-Variance Estimation with Intractable Likelihoods. {\it Baysian Anal.}, {\bf 11}(1), 215-245.

\bibitem[Ghosh and Clyde, 2011]{Ghosh} Ghosh, J. and Clyde, M. A. (2011) Rao-Blackwellization for Bayesian variable selection and model averaging in linear and binary regression: A novel data augmentation approach. {\it J. Am. Stat. Assoc.}, {\bf 106}(495), 1041-1052.

\bibitem[Giles, 2013]{Giles} Giles, M. B. (2013) Multilevel Monte Carlo methods. In {\it Monte Carlo and Quasi-Monte Carlo Methods} (pp. 83-103). Springer, Berlin Heidelberg.

\bibitem[Giles and Szpruch, 2014]{Giles2} Giles, M. B. and Szpruch, L. (2014) Antithetic multilevel Monte Carlo estimation for multi-dimensional SDEs without L\'{e}vy area simulation. {\it Ann. Appl. Prob.}, {\bf 24}(4), 1585-1620.

\bibitem[Girolami and Calderhead, 2011]{Girolami} Girolami, M. and Calderhead, B. (2011) Riemann manifold Langevin and Hamiltonian Monte Carlo methods. {\it J. R. Statist. Soc. B}, {\bf 73}, 1-37.

\bibitem[Glasserman, 2004]{Glasserman} Glasserman, P. (2004) {\it Monte Carlo methods in financial engineering}. Springer, New York.

\bibitem[Gorham and Mackey, 2015]{Gorham} Gorham, J. and Mackey, L. (2015) Measuring Sample Quality with Stein's Method. {\it Adv. Neur. In.} {\bf 28}, 226-234.

\bibitem[Green and Han, 1992]{Green2} Green, P. and Han, X. (1992) Metropolis methods, Gaussian proposals, and antithetic variables. {\it Lect. Notes Stat.}, {\bf 74}, 142-164.

\bibitem[Hammer and Tjelmeland, 2008]{Hammer} Hammer, H. and Tjelmeland, H. (2008) Control variates for the Metropolis-Hastings algorithm. {\it Scand. J. Stat.}, {\bf 35}, 400-414.

\bibitem[Heinrich, 1995]{Heinrich2} Heinrich, S. (1995) Variance reduction for Monte Carlo methods by means of deterministic numerical computation. {\it Monte Carlo Methods and Applications}, {\bf 1}(4), 251-278.

\bibitem[Higdon {\it et al}., 2015]{Higdon} Higdon, D., McDonnell, J. D., Schunck, N., Sarich, J. and Wild, S. M. (2015) A Bayesian approach for parameter estimation and prediction using a computationally intensive model. {\it J. Phys. G: Nucl. Part. Phys.}, {\bf 42}, 034009. 

\bibitem[Kohlhoff {\it et al.}, 2014]{Kohlhoff} Kohlhoff, K. J., Shukla, D., Lawrenz, M., Bowman, G. R., Konerding, D. E., Belov, D., Altman, R. B. and Pande, V. S. (2014) Cloud-based simulations on Google Exacycle reveal ligand modulation of GPCR activation pathways. {\it Nat. Chem.}, {\bf 6}(1), 15-21.

\bibitem[Kristoffersen, 2013]{Kristoffersen} Kristoffersen, S. (2013) The empirical interpolation method. {\it Master's thesis, Department of Mathematical Sciences, Norwegian University of Science and Technology}.

\bibitem[\L atuszy\'{n}ski {\it et al.}, 2015]{Latuszynski2} \L atuszy\'{n}ski, K., Green, P., Pereyra, M. and Robert, C. P. (2015) Bayesian computation: a perspective on the current state, and sampling backwards and forwards. {\it Stat. Comput.}, {\bf 25}(4), 835-862.

\bibitem[Li {\it et al.}, 2013]{Li2} Li, W., Tan, Z. and Chen, R. (2013) Two-Stage Importance Sampling With Mixture Proposals. {\it J. Am. Stat. Assoc.}, {\bf 108}(504), 1350-1365.

\bibitem[Li {\it et al.}, 2016]{Li} Li, W., Chen, R. and Tan, Z. (2016) Efficient Sequential Monte Carlo with Multiple Proposals and Control Variates. {\it J. Am. Stat. Assoc.}, to appear.

\bibitem[Liu {\it et al.}, 2016]{Liu} Liu, Q., Lee, J. D. and Jordan, M. (2016) A Kernelized Stein Discrepancy for Goodness-of-fit Tests and Model Evaluation. arXiv:1602.03253.

\bibitem[Mira {\it et al.}, 2003]{Mira2} Mira, A., Tenconi, P. and Bressanini, D. (2003) Variance reduction for MCMC. {\it Technical Report 2003/29, Universit\'{a} degli Studi dell' Insubria, Italy.}

\bibitem[Mira {\it et al.}, 2013]{Mira} Mira, A., Solgi, R. and Imparato, D. (2013) Zero Variance Markov Chain Monte Carlo for Bayesian Estimators. {\it Stat. Comput.}, {\bf 23}, 653-662.

\bibitem[Mizielinski {\it et al.}, 2014]{Mizielinski} Mizielinski, M. S., Roberts, M. J., Vidale, P. L., Schiemann, R., Demory, M. E., Strachan, J., Edwards, T., Stephens, A., Lawrence, B. N., Pritchard, M., Chiu, P., Iwi, A., Churchill, J., del Cano Novales, C., Kettleborough, J., Roseblade, W., Selwood, P., Foster, M., Glover, M. and Malcolm, A. (2014) High-resolution global climate modelling: the UPSCALE project, a large-simulation campaign. {\it Geosci. Model Dev.}, {\bf 7}(4), 1629-1640.

\bibitem[Mijatovi\'{c} and Vogrinc, 2015]{Mijatovic} Mijatovi\'{c}, A. and Vogrinc, J. (2015) On the Poisson equation for Metropolis-Hastings chains. arXiv:1511.07464.

\bibitem[O'Hagan, 1991]{OHagan} O'Hagan, A. (1991) Bayes-Hermite Quadrature. {\it J. Stat. Plan. Infer.}, {\bf 29}, 245-260.

\bibitem[Oates {\it et al.}, 2016]{Oates2} Oates, C. J., Papamarkou, T. and Girolami, M. (2016) The Controlled Thermodynamic Integral for Bayesian Model Evidence Evaluation. {\it J. Am. Stat. Assoc.}, to appear.

\bibitem[Oates and Girolami, 2016]{Oates3} Oates, C. J. and Girolami, M. (2016) Control Functionals for Quasi Monte Carlo Integration. {\it In: Proc. 19th International Conference on Artificial Intelligence and Statistics (AISTATS)}, to appear.

\bibitem[Oates {\it et al.}, 2016]{Oates7} Oates, C. J., Cockayne, J., Briol, F.-X. and Girolami, M. (2016) Convergence Rates for a Class of Estimators Based on Stein's Identity. arXiv:1603.03220.

\bibitem[Olsson and Ryden, 2011]{Olsson} Olsson, J. and Ryden, T. (2011) Rao-Blackwellization of particle Markov chain Monte Carlo methods using forward filtering backward sampling. {\it IEEE T. Signal Proces.}, {\bf 59}(10), 4606-4619.

\bibitem[Papamarkou {\it et al.}, 2014]{Papamarkou} Papamarkou, T., Mira, A. and Girolami, M. (2014) Zero Variance Differential Geometric Markov Chain Monte Carlo Algorithms. {\it Bayesian Anal.}, {\bf 9}:97-128, 

\bibitem[Philippe, 1997]{Philippe} Philippe, A. (1997) Processing simulation output by Riemann sums. {\it J. Statist. Comput. Simul.}, {\bf 59}, 295-314.

\bibitem[Rasmussen and Ghahramani, 2003]{Rasmussen} Rasmussen, C. E. and Ghahramani, Z. (2003) Bayesian Monte Carlo. {\it Adv. Neur. Inf.}, {\bf 17}, 505-512.

\bibitem[Rasmussen and Williams, 2006]{Rasmussen2} Rasmussen, C.E. and Williams, C.K. (2006) {\it Gaussian Processes for Machine Learning}. MIT Press.

\bibitem[Robert and Casella, 2004]{Robert} Robert, C. and Casella, G. (2004) {\it Monte Carlo Statistical Methods (2nd ed.)}. Springer-Verlag, New York.

\bibitem[Rubinstein and Marcus, 1985]{Rubinstein} Rubinstein, R. Y. and Marcus, R. (1985) Efficiency of Multivariate Control Variates in Monte Carlo Simulation. {\it Oper. Res.}, {\bf 33}, 661-677.

\bibitem[Rubinstein and Kroese, 2011]{Rubinstein2} Rubinstein, R. Y. and Kroese, D. P. (2011) {\it Simulation and the Monte Carlo method}. John Wiley and Sons, New Jersey.

\bibitem[Slingo {\it et al.}, 2009]{Slingo} Slingo, J., Bates, K., Nikiforakis, N., Piggott, M., Roberts, M., Shaffrey, L., Stevens, L., Vidale, P. L. and Weller, H. (2009) Developing the next-generation climate system models: challenges and achievements. {\it Philos. T. R. Soc. A}, {\bf 367}, 815-831.

\bibitem[Sommariva and Vianello, 2006]{Sommariva} Sommariva, A. and Vianello, M. (2006) Numerical cubature on scattered data by radial basis functions. {\it Computing}, {\bf 76}(3-4), 295-310.

\bibitem[Speight, 2009]{Speight} Speight, A. (2009) A multilevel approach to control variates. {\it J. Comp. Finance}, {\bf 12}, 1-25.

\bibitem[Stein, 1970]{Stein} Stein, C. (1970) A bound for the error in the normal approximation to the distribution of a sum of dependent random variables. {\it Proc. 6th Berkeley Symp. Math. Statist. Prob.} {\bf 2}, 583-602.  

\bibitem[Steinwart and Christmann, 2008]{Steinwart2} Steinwart, I. and Christmann, A. (2008) {\it Support Vector Machines}. Springer, New York.

\bibitem[Sun and Wu, 2009]{Sun1} Sun, H. and Wu, Q. (2009) Application of integral operator for regularized least-square regression. {\it Math. Comput. Model.}, {\bf 49}(1), 276-285.

\end{thebibliography}
\end{document}